\pgfplotsset{compat=1.15}
\definecolor{ududff}{rgb}{0.30196078431372547,0.30196078431372547,1}
\definecolor{zzttqq}{rgb}{0.6,0.2,0}
\definecolor{xdxdff}{rgb}{0.49019607843137253,0.49019607843137253,1}
\definecolor{ududff}{rgb}{0.30196078431372547,0.30196078431372547,1}
\def\sa{{\mathfrak a}}
\def\phi{\varphi}
   \def\dN{{\mathbb N}}   
      \def\dR{{\mathbb R}}
\def\cA{{\mathcal A}}   \def\cB{{\mathcal B}}   
\def\cD{{\mathcal D}}   \def\cE{{\mathcal E}}   \def\cF{{\mathcal F}}
\def\cG{{\mathcal G}}
\def\cP{{\mathcal P}}      
\def\cV{{\mathcal V}}
\def\dd{{\mathrm d}}
\def\ee{{\mathrm e}}
\DeclareMathOperator\Imag{{\text{\rm Im}}}
\DeclareMathOperator\dom{{\text{\rm dom\,}}} 
\DeclareMathOperator{\diam}{diam}
\DeclareMathOperator{\N}{\mathbb{N}}
\DeclareMathOperator{\C}{\mathbb{C}}
\def\dd{\mathrm{d}}
\DeclareMathOperator{\diag}{diag}
\renewcommand\setminus{\mathbin{\mathpalette\rsetminusaux\relax}}
\newcommand\rsetminusaux[2]{\mspace{-4mu}
  \raisebox{\rsmraise{#1}\depth}{\rotatebox[origin=c]{-20}{$#1\smallsetminus$}}
 \mspace{-4mu}
}
\newcommand\rsmraise[1]{%
  \ifx#1\displaystyle .8\else
    \ifx#1\textstyle .8\else
      \ifx#1\scriptstyle .6\else
        .45%
      \fi
    \fi
  \fi}
\newcommand*{\rom}[1]{\expandafter\@slowromancap\romannumeral #1@}
\newtheorem{theorem}{Theorem}[section]
\newtheorem*{thm*}{Theorem}
\newtheorem{proposition}[theorem]{Proposition}
\newtheorem{corollary}[theorem]{Corollary}
\theoremstyle{definition}
\theoremstyle{definition}
\newtheorem{definition}[theorem]{Definition}
\newtheorem{example}[theorem]{Example}
\newtheorem{remark}[theorem]{Remark}
\numberwithin{equation}{section}
\title[]{Optimization of quantum graph eigenvalues with preferred orientation vertex conditions}
\author[P.~Exner]{Pavel Exner}
\address{Doppler Institute for Mathematical Physics and Applied Mathematics \\  Czech Technical University \\
B\v rehov\'{a}~7, 11519 Prague \\ Czechia, \\ and Department of Theoretical Physics, NPI \\  Czech Academy of Sciences \\ 25068 \v{R}e\v{z}
near Prague \\ Czechia}
\email{exner@ujf.cas.cz}
\author[J.~Rohleder]{Jonathan Rohleder}
\address{Matematiska institutionen \\ Stockholms universitet \\
106 91 Stockholm \\
Sweden}
\email{jonathan.rohleder@math.su.se}
\begin{document}

\begin{abstract}
We discuss Laplacian spectrum on a finite metric graph with vertex couplings violating the time-reversal invariance. For the class of star graphs we determine, under the condition of a fixed total edge length, the configurations for which the ground state eigenvalue is maximized. Furthermore, for general finite metric graphs we provide upper bounds for all eigenvalues.
\end{abstract}

\maketitle

\section{Introduction}

Quantum graph is a common shorthand for the Laplacian, or a more general Schr\"o\-dinger operator, on a metric graph. The idea appeared first at the early days of quantum theory as a tool to describe spectral properties of hydrocarbon molecules. However, the real development started only in the eighties being motivated by the progress in solid state physics. It became soon clear that the concept has a rich mathematical content which became a matter of investigation in numerous papers; we refer to the monographs \cite{BK13,KN22,Ku24,M14,P3+05} for an exposition and a rich bibliography. One of the questions often addressed concerned optimization of Laplacian eigenvalues on a finite graph in terms of its metric properties -- see, e.g., \cite{AEHK24p,BL17,BKKM17,BCJ21,BHY23,Fr05,K20,KKMM16,KMN13,KN14,Ni87,Ro17,R22} or \cite[Sec.~13.1]{Ku24}.

One of the peculiar properties of quantum graphs is that in order to get the Laplacian to be a well-defined self-adjoint operator, one has to specify its domain by choosing the conditions matching functions at the graph vertices. The choice of these conditions is by far not unique, at a vertex of degree $N$ there is an $N^2$ parameter family of them; the general form of vertex conditions is usually referred to \cite{KS99,Ha00} but they were known already to Rofe-Beketov \cite{RB69}. The ones most often used are the simplest ones, assuming continuity of the functions at the vertex and a vanishing sum of their derivatives; such conditions are most often labeled as {\em continuity-Kirchhoff}, or simply \emph{Kirchhoff}. With rare exceptions \cite{KKT16,KS19,MR21}, this applies in particular to the optimization problem mentioned above.

Other vertex conditions are also worth attention, both from the physics and mathematics point of view. Motivated by an attempt to use quantum graphs to model the anomalous Hall effect \cite{SK15} a vertex coupling violating the time-reversal invariance was proposed \cite{ET18}. To be more precise, at a vertex $v$ of degree $N$ these conditions are given by the relations
\begin{align}\label{eq:conditionsIntro}
 \left( F_{j + 1} - F_j \right) + i \left( F_j' + F_{j+1}' \right) = 0, \quad j = 1, \dots, N,
\end{align}
where for a sufficiently regular function $f$ on the graph, $F_1, \dots, F_N$ denote the values at $v$ of the restrictions of $f$ to each of the edges incident to $v$ and $F_1', \dots, F_N'$ are the corresponding derivatives, taken in the direction from the vertex into the edge; when writing \eqref{eq:conditionsIntro} we use the ``cyclic'' convention to identify $F_{N+1}$ with $F_1$, and equally for the derivatives. Note that at ``loose ends'', i.e.\ vertices of degree one, the conditions \eqref{eq:conditionsIntro} simplify to Neumann conditions. It was found that the coupling \eqref{eq:conditionsIntro} and its generalizations have some unusual spectral and transport properties -- see, e.g., \cite{BEL23,BET22,EL19,ET21}. The aim of this paper is to investigate how these properties will be manifested in the optimization of graph Laplacian eigenvalues.

The first situation which we study in detail is the case of star graphs $\Gamma$, i.e.\ graphs with one central vertex of degree $N \geq 3$ and $N$ edges, each of which connects that central vertex to a vertex of degree one. We prove that the ground state eigenvalue $\lambda_1 (\Gamma)$ (which is always negative) is maximized among all stars of the same total edge length by the equilateral one of degree three or four depending on whether the original edge number was odd or even, respectively. Moreover, these graphs are shown to be the unique maximizers. Secondly, we consider the situation of general finite metric graphs and prove upper estimates for all eigenvalues. These bounds turn out to be sharp; more precisely, they are attained by the equilateral figure-8 graph, i.e.\ the graph on one vertex and two loop edges attached to it. Remarkably, it can be shown that $\lambda_1 (\Gamma) \leq -1$ holds for any metric graph $\Gamma$, and equality is attained by any figure-8 graph, independent of its total length; this is a manifestation of the fact that the operator under consideration does not scale with a power of the graph size.

The proofs in this paper are based on so-called surgery principles, which we establish for the vertex conditions \eqref{eq:conditionsIntro}. It has been observed earlier for more standard vertex conditions that studying the effect of geometric perturbations of a metric graph on the eigenvalues is a powerful tool for proving spectral estimates, see, e.g., \cite{BKKM19,KMN13,KN14,RS20}. Such geometric perturbations include for instance merging or splitting vertices or attaching additional edges. It may be considered interesting on its own that the behavior of eigenvalues under surgery operations for the conditions \eqref{eq:conditionsIntro} turns out to differ partially from the behavior we know for continuity-Kirchhoff or $\delta$ coupling conditions. For instance, merging two vertices into one can increase or decrease the eigenvalues, depending on the parity of the involved vertex degrees. In the case of star graphs, some surgery principles are proven based on semi-explicit computations.

Let us describe the contents of the paper. In the following section we consider star graphs, compute the quadratic form corresponding to the conditions \eqref{eq:conditionsIntro}, and establish some elementary spectral properties. In the subsequent Section \ref{sec:groundState} the main optimization result for star graphs is established.
In Section~\ref{s: surgery} we develop the surgery idea further and prove upper bounds for higher eigenvalues on star graphs. In Section~\ref{s:Dirichlet} we briefly discuss how some spectral properties modify if we impose Dirichlet conditions at the `loose' edge endpoints, instead of Neumann. Finally, Section~\ref{sec:generalGraphs} contains the main result on optimal upper bounds on eigenvalues for a general compact graph with the vertex coupling \eqref{eq:conditionsIntro}.

\section{Vertex conditions, quadratic form, and elementary spectral properties}

In this section we introduce the non-standard vertex conditions under consideration and compute the corresponding quadratic form, which will be an important tool in our analysis. Moreover, we derive some elementary spectral properties. Since the vertex coupling is the essential ingredient of our discussion, here and in the next few sections we focus on compact \emph{star graphs}. The case of more general compact metric graphs will be the subject of Section \ref{sec:generalGraphs}.

To fix the present setting, $\Gamma$ will be a compact metric star graph with a single vertex $v_0$ of degree $N$ and edges $e_1, \dots, e_N$, identified with intervals $[0, l_j]$ such that the endpoint zero corresponds to the vertex $v_0$. For $k \in \N$ we write
\begin{align*}
 \widetilde H^k (\Gamma) = \left\{ f : f_j \in H^k (0, l_j), j = 1, \dots, N \right\},
\end{align*}
where $f_j$ denotes the restriction of the function $f : \Gamma \to \C$ to the edge $e_j \mathrel{\widehat{=}} [0, l_j]$. The object of our interest is the operator acting as negative second derivative on each edge with vertex conditions
\begin{align}\label{eq:conditions}
 \left( F_{j + 1} - F_j \right) + i \left( F_j' + F_{j+1}' \right) = 0, \quad j = 1, \dots, N,
\end{align}
where, for $f \in \widetilde H^2 (\Gamma)$,
\begin{align*}
 F = (f_1 (0), f_2 (0), \dots, f_N (0))^\top \quad \text{and} \quad F' = (f_1' (0), f_2' (0), \dots, f_N' (0))^\top
\end{align*}
with the entries being left limits; the indices are understood modulo $N$, i.e.\ $F_{N + 1}$ is identified with $F_1$ and so on.
At the vertices of degree one, we impose Neumann boundary conditions. We point out that Neumann conditions formally correspond to the conditions \eqref{eq:conditions} at such vertices.
\begin{remark}
An implication of the conditions \eqref{eq:conditions} is
\begin{align}\label{eq:Kirchhoff}
 \sum_{j = 1}^N F_j' = 0,
\end{align}
obtained by summation over $N$. Furthermore, if $N$ is even, then summing up the conditions multiplied by $(-1)^j$ leads to the implication
\begin{align}\label{eq:antiKirchhoff}
\sum_{j = 1}^N (-1)^j F_j = 0.
\end{align}
In particular, if $N = 2$, then these conditions are identic with the usual continuity-Kirchhoff vertex conditions, and the operator under consideration on a 2-star reduces to the Neumann Laplacian on the interval of length $l_1 + l_2$.
Note also that sometimes a variant of the conditions \eqref{eq:conditions} is considered with a factor $\ell>0$ in front of the second term at the left-hand side which fixes the length scale. The limit $\ell\to 0$ then returns us to continuity-Kirchhoff conditions.
\end{remark}
\begin{remark}
On a graph with continuity-Kirchhoff conditions the Laplacian scales quadratically with the graph size. For the conditions \eqref{eq:conditions} this is not the case, and the reason is that, in the terminology of \cite{BK13}, the vertex coupling has a nontrivial Robin part. This happens for many other matching conditions. For instance, the $\delta$ coupling, in which the functions are continuous at the vertex and $\sum_{j = 1}^N F_j'$ is a (nonzero) multiple of the common value, has a one-dimensional Robin component; in case of condition \eqref{eq:conditions} the dimension of the Robin component is $\lfloor \frac{N}{2}\rfloor$.
\end{remark}

In the following we find the expression of the quadratic form corresponding to the vertex conditions \eqref{eq:conditions}. For details on semi-bounded, closed sesquilinear forms and associated self-adjoint operators we refer to \cite[Chapter VI]{K95}.

\begin{proposition}\label{prop:form}
Let $N \geq 2$. The sesquilinear form
\begin{align*}
 \sa [f, g] & = \sum_{j = 1}^N \int_0^{l_j} f_j' \overline{g_j'}{\,\dd x} + i \sum_{j = 2}^N \sum_{k = 1}^{j - 1} (-1)^{j + k} \left( F_k \overline{G_j} - F_j \overline{G_k} \right)
\end{align*}
with domain
\begin{align*}
 \dom \sa = \begin{cases}
             \widetilde H^1 (\Gamma), & \text{if}~N~\text{is odd},\\
             \bigg\{ f \in \widetilde H^1 (\Gamma) : \sum_{j = 1}^N (-1)^j F_j = 0 \bigg\},& \text{if}~N~\text{is even},
            \end{cases}
\end{align*}
is densely defined, symmetric, semi-bounded below and closed. The corresponding self-adjoint operator in $L^2 (\Gamma)$ acts as the negative second derivative subject to the vertex conditions \eqref{eq:conditions} at $v_0$ and Neumann boundary conditions at the vertices of degree one.
\end{proposition}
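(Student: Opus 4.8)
The structural point I would start from is that the boundary part of $\sa$ equals $i\langle BF,G\rangle_{\C^N}$, where $B=(B_{jk})$ is the real skew-symmetric matrix with $B_{jk}=\sgn(j-k)(-1)^{j+k}$ for $j\neq k$ and $B_{jj}=0$; hence $iB$ is Hermitian and $i\langle BF,F\rangle\in\R$. Symmetry of $\sa$ is then immediate: the Dirichlet part $\sum_j\int f_j'\overline{g_j'}$ is symmetric and $i\langle BF,G\rangle$ is symmetric because $iB$ is Hermitian, the extra constraint for even $N$ merely shrinking the domain. For semiboundedness and closedness I would first look at the form $\sa_0$ with the same expression but domain $\widetilde H^1(\Gamma)$: by the standard Sobolev trace estimate $|h(0)|^2\le\eps\|h'\|_{L^2(0,l)}^2+C_{\eps,l}\|h\|_{L^2(0,l)}^2$ on each edge one obtains $|i\langle BF,F\rangle|\le\|B\|\sum_j|F_j|^2\le\eps\|f'\|_{L^2(\Gamma)}^2+C_\eps\|f\|_{L^2(\Gamma)}^2$ for every $\eps>0$, so the boundary term is an infinitesimally form-bounded perturbation of the closed form $f\mapsto\|f'\|_{L^2(\Gamma)}^2$ on $\widetilde H^1(\Gamma)$; by the KLMN theorem $\sa_0$ is semibounded from below and closed. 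For odd $N$ this is $\sa$. For even $N$, $\dom\sa=\{f\in\widetilde H^1(\Gamma):\sum_j(-1)^jF_j=0\}$ is a closed subspace of $\widetilde H^1(\Gamma)$ (because $f\mapsto f_j(0)$ is $H^1$-continuous), and the restriction of a closed semibounded form to a closed subspace is again closed and semibounded. Density in $L^2(\Gamma)$ is clear in both cases.

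Next I would identify the self-adjoint operator $A$ associated with $\sa$. If $f\in\dom A$ with $Af=h$, then testing $\sa[f,g]=\langle h,g\rangle$ against $g$ supported in the interior of a single edge gives $f_j\in H^2(0,l_j)$ and $-f_j''=h_j$, so $f\in\widetilde H^2(\Gamma)$ and $Af=-f''$ edgewise. Integration by parts then yields
\begin{align*}
\sa[f,g]=\langle -f'',g\rangle+\sum_{j=1}^N f_j'(l_j)\overline{g_j(l_j)}+\langle iBF-F',G\rangle,\qquad g\in\dom\sa .
\end{align*}
Testing with $g$ vanishing near $v_0$ forces the Neumann conditions $f_j'(l_j)=0$ at the loose ends, after which $\langle iBF-F',G\rangle=0$ for all admissible $G$. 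Since the traces $G$ exhaust $\C^N$ when $N$ is odd and exhaust $w^\perp$, $w=((-1)^j)_j$, when $N$ is even, this means $F'=iBF$ for odd $N$, and $F'-iBF\in\spann\{w\}$ together with the constraint $\langle F,w\rangle=0$ inherited from $\dom\sa$ for even $N$. The converse inclusion is obtained by reversing the same computation (using that $\langle w,G\rangle=0$ on $\dom\sa$ when $N$ is even).

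It remains to match these vertex conditions with \eqref{eq:conditions}. Writing $R$ for the cyclic shift $(RF)_j=F_{j+1}$, conditions \eqref{eq:conditions} read $i(I+R)F'=(I-R)F$. A direct entrywise computation gives the identity $(I+R)B=-(I-R)$ when $N$ is odd; when $N$ is even the matrix $(I+R)B+(I-R)$ has a single nonzero row proportional to $w^\top$, so that $(I+R)BF=-(I-R)F$ holds precisely for $F\in w^\perp$, while $(I+R)w=0$. For odd $N$, $I+R$ is invertible (its eigenvalues $1+\zeta$, $\zeta^N=1$, never vanish), so $F'=iBF$ is equivalent to $i(I+R)F'=(I-R)F$. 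For even $N$, $\ker(I+R)=\spann\{w\}$ and $\ran(I+R)=w^\perp$; hence \eqref{eq:conditions} is solvable for $F'$ exactly when $(I-R)F\in w^\perp$, i.e.\ $\langle F,w\rangle=0$, and under this constraint it is equivalent to $F'-iBF\in\ker(I+R)=\spann\{w\}$ — exactly the conditions extracted above. Finally, since summing \eqref{eq:conditions} against $(-1)^j$ reproduces $\sum_j(-1)^jF_j=0$ (cf.\ \eqref{eq:antiKirchhoff}), membership in $\dom\sa$ is automatic, so $\dom A=\{f\in\widetilde H^2(\Gamma):\text{\eqref{eq:conditions} holds at }v_0\text{ and }f_j'(l_j)=0\ \text{for all}\ j\}$, as claimed.

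The routine ingredients here are the trace estimate and KLMN argument for the form properties and the standard first Green identity for the operator. The step I expect to require the most care is the even case of part (iii): one has to reconcile the one-dimensional kernel and cokernel of $I+R$ (the alternating vector $w$) with the constraint $\sum_j(-1)^jF_j=0$ that is already built into $\dom\sa$, and verify that the resulting pair of conditions is exactly equivalent to the full system \eqref{eq:conditions}.
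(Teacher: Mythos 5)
Your argument is correct, and while the form-theoretic half (symmetry via the Hermitian matrix $iB$, the trace estimate plus KLMN for semiboundedness and closedness, restriction to the closed subspace $\{\sum_j(-1)^jF_j=0\}$ for even $N$) runs parallel to the paper's proof, your identification of the operator follows a genuinely different route. The paper derives the vertex conditions by inserting specific test functions ($G_{j_0}=G_{j_0+1}=1$, all other traces zero) and manipulating the resulting double sums, and then proves the converse inclusion separately through a recursion expressing $F_j'$ in terms of $F_1'$ and the boundary values, invoking \eqref{eq:Kirchhoff} and \eqref{eq:antiKirchhoff}. You instead encode both the form-derived natural boundary condition ($F'-iBF\in\ker(I+R)$, with $F$ constrained to $w^\perp$ for even $N$) and the target conditions \eqref{eq:conditions} (as $i(I+R)F'=(I-R)F$) in matrix form, and reduce everything to the single identity $(I+R)B=-(I-R)$ modulo the alternating vector $w$; I checked this identity entrywise and it holds, with the defect row equal to $2w^\top$ precisely when $N$ is even, which is exactly what reconciles the one-dimensional kernel and cokernel of $I+R$ with the domain constraint. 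The payoff of your version is that both inclusions ($\dom A$ versus the set of $\widetilde H^2$ functions satisfying \eqref{eq:conditions}) follow simultaneously from one linear-algebra fact, and the structural role of $\ker(I+R)=\spann\{w\}$ in the even case becomes transparent; the paper's computation is more elementary but requires two separate, somewhat lengthy sum manipulations. The one step you should spell out if writing this up in full is the entrywise verification of $(I+R)B+(I-R)=0$ (respectively $=2\,e_N w^\top$ for even $N$), since that identity carries the entire weight of the equivalence.
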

\begin{proof}
For any $N$, all smooth functions that vanish at $v_0$ belong to $\dom \sa$, hence the latter is dense in $L^2 (\Gamma)$. For the symmetry
of $\sa$ it is sufficient to note that $\sa [f] := \sa [f, f]$ is real since
\begin{align*}
 F_k \overline{F_j} - F_j \overline{F_k} & = 2 i \Imag \left( F_k \overline{F_j} \right)
\end{align*}
holds for all $j$ and $k$. As for the semi-boundedness, if we choose $l > 0$ such that $l \leq l_j$ for $j = 1, \dots, N$, then by \cite[Lemma~1.3.8]{BK13}
\begin{align*}
 |F_j|^2 \leq \frac{2}{l} \| f_j \|^2 + l \|f_j'\|^2
\end{align*}
holds for $j = 1, \dots, N$, where $\| f_j \|$ denotes the norm of $f_j$ in $L^2 (0, l_j)$, etc. Consequently, for $j = 1, \dots, N$ we have
\begin{align*}
 2 \Imag \left( F_k \overline{F_j} \right) & \leq 2 |F_k| |F_j| \leq \left( |F_k|^2 + |F_j|^2 \right) \\
 & \leq \frac{2}{l} \left( \|f_k\|^2 + \|f_{j}\|^2 \right) + l \left( \|f_k'\|^2 + \|f_{j}'\|^2 \right),
\end{align*}
which implies
\begin{align*}
 2 \Imag \sum_{j = 2}^N \sum_{k = 1}^{j - 1} (-1)^{j + k} \left( F_k \overline{F_j}\right) & \leq \frac{4 N}{l} \|f\|_{L^2 (\Gamma)}^2 + 2 N l \sum_{j = 1}^N \int_0^{l_j} |f_j'|^2 {\,\dd x}.
\end{align*}
From this we obtain
\begin{align}\label{eq:formEquivalence}
 \sa [f] & \geq \left(1 - 2 N l \right) \sum_{j = 1}^N \int_0^{l_j} |f_j'|^2{\,\dd x} - \frac{4 N}{l} \|f\|_{L^2 (\Gamma)}^2.
\end{align}
Choosing $l \leq \frac{1}{2 N}$ we conclude that $\sa$ is semi-bounded below and that $- 4 N /l$ is a lower bound. Moreover, \eqref{eq:formEquivalence} indicates that the norm induced by $\sa$ on $\dom \sa$ is equivalent to the norm of $\widetilde H^1 (\Gamma)$. As $\dom \sa$ is closed in $\widetilde H^1 (\Gamma)$ it follows that $\sa$ is closed.

It remains to show that the self-adjoint operator $A$ associated with $\sa$ obeys the claimed vertex conditions. First of all, for $f \in \dom A$ and $g \in \dom \sa$ we have
\begin{align*}
 (A f, g)_{L^2 (\Gamma)} & = \sa [f, g] = \sum_{j = 1}^N \int_0^{l_j} f_j' \overline{g_j'}{\,\dd x} + i \sum_{j = 2}^N \sum_{k = 1}^{j - 1} (-1)^{j + k} \left( F_k \overline{G_j} - F_j \overline{G_k} \right) \\
 & = - \sum_{j = 1}^N \int_0^{l_j} f_j'' \overline{g_j}{\,\dd x} + \sum_{j = 1}^N \left( f_j' (l_j) \overline{g_j (l_j)} - f_j' (0) \overline{g_j (0)} \right) \\
 & \quad + i \sum_{j = 2}^N \sum_{k = 1}^{j - 1} (-1)^{j + k} \left( F_k \overline{G_j} - F_j \overline{G_k} \right).
\end{align*}
Choosing $g$ such that $g_j \in C_0^\infty (0, l_j)$ for $j = 1, \dots, N$ we obtain that $A$ acts as negative second derivative operator. In particular, we conclude that
\begin{align}\label{eq:onTheWay}
 0 & = \sum_{j = 1}^N \left( f_j' (l_j) \overline{g_j (l_j)} - F_j' \overline{G_j} \right) + i \sum_{j = 2}^N \sum_{k = 1}^{j - 1} (-1)^{j + k} \left( F_k \overline{G_j} - F_j \overline{G_k} \right).
\end{align}
If we choose for any given $j_0 \in \{1, \dots, N\}$ some $g$ such that $g_j \in C^\infty (0, l_j)$ with $g_j (0) = 0$ for $j = 1, \dots, N$, $g_j (l_j) = 0$ for $j \neq j_0$ and $g_{j_0} (l_{j_0}) = 1$ we see that $f$ satisfies a Neumann condition at the vertex $v_{j_0}$ to which $e_{j_0}$ is incident. In particular, \eqref{eq:onTheWay} simplifies
\begin{align*}
 0 & = - \sum_{j = 1}^N F_j' \overline{G_j} + i \sum_{j = 2}^N \sum_{k = 1}^{j - 1} (-1)^{j + k} \left( F_k \overline{G_j} - F_j \overline{G_k} \right) \\
 & = - \sum_{j = 1}^N F_j' \overline{G_j} + i \left( \sum_{j = 2}^N \sum_{k = 1}^{j-1} (-1)^{j + k} F_k \overline{G_j} - \sum_{k = 1}^{N - 1} \sum_{j = k + 1}^N (-1)^{j + k} F_j \overline{G_k} \right).
\end{align*}
Let us now choose $g$ such that $G_j = 1$ if $j \in \{j_0, j_0 + 1\}$ and $G_j = 0$ otherwise (again identifying $G_{N+1}$ with $G_1$). We obtain
\begin{align*}
 0 & = - F_{j_0}' - F_{j_0 + 1}' + i \Bigg( \sum_{k = 1}^{j_0-1} (-1)^{j_0 + k} F_k - \sum_{k = 1}^{j_0} (-1)^{j_0 + k} F_k \\
 & \quad - \sum_{j = j_0 + 1}^N (-1)^{j + j_0} F_j + \sum_{j = j_0 + 2}^N (-1)^{j + j_0} F_j \Bigg) \\
 & = - F_{j_0}' - F_{j_0 + 1}' + i \left( F_{j_0 + 1} - F_{j_0} \right),
\end{align*}
(where we apply the convention that sums with lower index larger than upper index are zero) and the latter is valid for all $j_0 \in \{1, \dots, N\}$. These conditions are obviously equivalent to \eqref{eq:conditions}.

Conversely, let us assume that $f \in \widetilde H^2 (\Gamma)$ satisfies the conditions \eqref{eq:conditions} at the central vertex and Neumann conditions at the vertices of degree one. Our aim is to show $f \in \dom A$. Indeed, for each $g \in \dom \sa$ we have
\begin{align}\label{eq:formCompBackwards}
 \sa [f, g] & = - \sum_{j = 1}^N \int_0^{l_j} f_j'' \overline{g_j}{\,\dd x} - \sum_{j = 1}^N F_j' \overline{G_j} + i \sum_{j = 2}^N \sum_{k = 1}^{j - 1} (-1)^{j + k} \left( F_k \overline{G_j} - F_j \overline{G_k} \right),
\end{align}
where the Neumann vertex conditions at the vertices of degree one were used. Now we distinguish cases: if $N$ is odd, then we can use \eqref{eq:Kirchhoff} and then \eqref{eq:conditions} to obtain
\begin{align*}
 \sum_{j = 1}^N F_j' \overline{G_j} & = - \sum_{j = 1}^N \sum_{k \neq j} F_k' \overline{G_j} = i \sum_{j = 1}^N \left( \sum_{k = 1}^{j-1} (-1)^{j+k} F_k \overline{G_j} - \sum_{k = j + 1}^N (-1)^{j+k} F_k \overline{G_j} \right).
\end{align*}
From this and \eqref{eq:formCompBackwards}, $f \in \dom A$ follows if $N$ is odd. For the case of even $N$, note that successive application of \eqref{eq:conditions} leads to the identity
\begin{align*}
 F_j' = (-1)^{j + 1} F_1' + i \left( F_j + 2 \sum_{k = 2}^{j-1} (-1)^{j+k} F_k + (-1)^{j+1} F_1 \right), \quad j = 2, 3, \dots, N.
\end{align*}
With this we get
\begin{align*}
 \sum_{j = 1}^N F_j' \overline{G_j} & = F_1' \overline{G_1} + \sum_{j = 2}^N (-1)^{j+1} F_1' \overline{G_j} \\
 & \quad + i \sum_{j = 2}^{N} \left( F_j + 2 \sum_{k = 2}^{j-1} (-1)^{j+k} F_k + (-1)^{j+1} F_1 \right) \overline{G_j} \\
 & = i \sum_{j = 2}^N \left( \sum_{k = 2}^{j-1} (-1)^{j+k} F_k - \sum_{k = j+1}^N (-1)^{j+k} F_k \right) \overline{G_j},
\end{align*}
where the last step follows from \eqref{eq:antiKirchhoff}. Similarly to the above computations this can be rewritten
\begin{align*}
 \sum_{j = 1}^N F_j' \overline{G_j} & = i \sum_{j = 2}^N \sum_{k = 2}^{j-1} (-1)^{j+k} \left(F_k \overline{G_j} - F_j \overline{G_k} \right).
\end{align*}
Finally, the summands with $k = 1$ can be added without changing the value of the sum, since 
\begin{align*}
 \sum_{j = 2}^N (-1)^{j+1} \left( F_1 \overline{G_j} - F_j \overline{G_1} \right) = - F_1 \overline{G_1} - \left(- F_1 \overline{G_1} \right) = 0,
\end{align*}
again by \eqref{eq:antiKirchhoff}. Hence $f \in \dom A$ follows for even $N$ as well and the proof is complete.
%
\end{proof}

Due to compactness of $\Gamma$, the spectrum of the operator $A$ is discrete accumulating at infinity. Let us ask about the existence of negative eigenvalues. As we have said, the coupling is trivial for $N=2$ and $A$ is in this case unitarily equivalent to the negative second derivative operator with Neumann boundary conditions on an interval of length $l_1 + l_2$, so the negative spectrum is empty. For any $N \geq 3$, however, negative eigenvalues exist. A simple, direct argument for this is given in the following proposition; see Remark \ref{rem:inf_bound} below for a stronger statement.

\begin{proposition} \label{p:neg_exist}
Let $N \geq 3$. Then the operator under consideration has at least one negative eigenvalue.
\end{proposition}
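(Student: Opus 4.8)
The plan is to combine the explicit quadratic form from Proposition~\ref{prop:form} with the variational principle. Since $\Gamma$ is compact, the spectrum of the operator $A$ associated with $\sa$ is discrete, so $\min\sigma(A)=\inf\{\sa[f]/\|f\|_{L^2(\Gamma)}^2 : 0\neq f\in\dom\sa\}$; hence it suffices to exhibit a single test function $f\in\dom\sa$ with $\sa[f]<0$, which then forces this minimum, an eigenvalue, to be negative.

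First I would put the form in a more convenient shape on the diagonal. Using $F_k\overline{F_j}-F_j\overline{F_k}=2i\Imag(F_k\overline{F_j})$, as already noted in the proof of Proposition~\ref{prop:form}, one gets
\[
 \sa[f]=\sum_{j=1}^N\int_0^{l_j}|f_j'|^2\,\dd x-2\sum_{j=2}^N\sum_{k=1}^{j-1}(-1)^{j+k}\Imag\bigl(F_k\overline{F_j}\bigr).
\]
This makes it transparent that it is enough to find boundary values $F_1,\dots,F_N\in\C$, not all zero, for which the double sum is strictly positive: one then takes $f$ to be constant on each edge with $f_j\equiv F_j$, which annihilates the Dirichlet term and lies in $\widetilde H^1(\Gamma)$.

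The concrete choice I would make, uniform in $N\geq 3$, is $F_1=1$, $F_2=i$, $F_3=i-1$, and $F_j=0$ for $j\geq 4$. Only the three summands indexed by $(j,k)\in\{(2,1),(3,1),(3,2)\}$ survive, and a short computation shows that together they equal $1$, so $\sa[f]=-2$, while $\|f\|_{L^2(\Gamma)}^2=l_1+l_2+2l_3>0$. For odd $N$ this $f$ automatically belongs to $\dom\sa=\widetilde H^1(\Gamma)$; for even $N$ one additionally checks the side condition, and indeed $\sum_{j=1}^N(-1)^jF_j=-1+i-(i-1)=0$, so $f\in\dom\sa$ in that case too. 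Consequently $\min\sigma(A)\leq -2/(l_1+l_2+2l_3)<0$, and by discreteness of the spectrum this negative number is an eigenvalue.

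There is no genuine obstacle here; the only point requiring a little care is to pick the three complex numbers $F_1,F_2,F_3$ so that simultaneously the indefinite cross-term $-2\sum(-1)^{j+k}\Imag(F_k\overline{F_j})$ is negative and, when $N$ is even, the constraint $\sum_j(-1)^jF_j=0$ defining $\dom\sa$ is met — the values above do both. (If one does not insist on a single choice for both parities, then for odd $N$ the even simpler test function $F_1=1$, $F_2=i$, $F_j=0$ for $j\geq 3$ already gives $\sa[f]=-2<0$.)
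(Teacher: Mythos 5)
Your proof is correct and follows essentially the same route as the paper's: an edgewise constant trial function with prescribed vertex values that makes the quadratic form of Proposition~\ref{prop:form} strictly negative, combined with the variational principle and discreteness of the spectrum. The only difference is cosmetic — you use the single choice $F_1=1$, $F_2=i$, $F_3=i-1$ for both parities of $N$ (the cross-term indeed sums to $1$ and the even-$N$ constraint $\sum_j(-1)^jF_j=0$ is satisfied), whereas the paper uses two separate test functions, supported on two edges for odd $N$ and on four edges for even $N$.
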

\begin{proof}
We have to distinguish between odd and even $N$. Let us first assume that $N$ is odd. Define an edgewise constant function $f$ by $f_1 = 1$, $f_2 = i$, and $f_j = 0$ identically otherwise. Then $f \in \dom \sa$ and
\begin{align*}
 \sa [f] & = - 2 \Imag \sum_{j = 2}^2 \sum_{k = 1}^{j - 1} (-1)^{j + k} F_k \overline{F_j} = - 2 < 0.
\end{align*}
Thus a negative eigenvalue must exist.

Now consider the case of an even $N \geq 4$. Define an edgewise constant function with values $f_1 = 1, f_2 = i, f_3 = - 1, f_4 = - i$ and $f_j = 0$ identically for all further edges (if any). Then $f$ satisfies $\sum_{j = 1}^N (-1)^j F_j = 0$ and thus it belongs to $\dom \sa$. Furthermore, a simple computation gives
\begin{align*}
 \sa [f] & = - 2 \Imag \sum_{j = 2}^4 \sum_{k = 1}^{j - 1} (-1)^{j + k} F_k \overline{F_j} = - 4 < 0,
\end{align*}
implying the existence of a negative eigenvalue.
\end{proof}

\begin{remark} \label{rem:inf_bound}
We can actually obtain a stronger lower bound on the number of negative eigenvalues. By a simple bracketing argument \cite[Sec.~XIII.15]{RS78} it cannot be less than the number of negative eigenvalues of the star graph with same vertex degree and semiinfinite edges, $l_j=\infty,\, j=1,\dots,N$. This number is known for any self-adjoint vertex coupling \cite[Thm.~2.6]{BET22}. In case of \eqref{eq:conditions} we know even the eigenvalues explicitly \cite{ET18}; their number is $\lfloor \frac{N-1}{2}\rfloor$.
\end{remark}

For finite star graphs an equation for the negative eigenvalues can be computed as follows. These computations will be useful in the context of the optimization problems considered below.
\begin{example}\label{ex:computation}
Putting conventionally $\lambda = - \kappa^2$ with $\kappa>0$, we may use the ansatz
\begin{align*}
 \psi_j (x) = \alpha_j \cosh (\kappa (l_j - x)).
\end{align*}
Then
\begin{align*}
 \psi_j' (x) = - \kappa \alpha_j \sinh(\kappa (l_j - x))
\end{align*}
and, in particular, the Neumann boundary condition at the vertices of degree one is automatically satisfied. Furthermore, the vertex conditions at the central vertex translate into
\begin{align*}
 \alpha_{j + 1} \cosh (\kappa l_{j+1}) - \alpha_j \cosh (\kappa l_j) - i \kappa \big( \alpha_j \sinh (\kappa l_j) + \alpha_{j+1} \sinh (\kappa l_{j+1}) \big) = 0
\end{align*}
for $j = 1, \dots, N$. For abbreviation, let us set $A_j = \cosh (\kappa l_j) + i \kappa \sinh (\kappa l_j)$ and $B_j = - \cosh (\kappa l_j) + i \kappa \sinh (\kappa l_j)$. Then the vertex conditions are equivalent to
\begin{align}\label{eq:LGS}
 \begin{pmatrix} A_1 & B_2 & 0 & \hdots & \hdots & 0 \\ 0 & \ddots & \ddots & \ddots &  & \vdots \\ \vdots & \ddots & \ddots & \ddots & \ddots & \vdots \\ \vdots & & \ddots & \ddots & \ddots & 0 \\ 0 & \hdots & \hdots & 0 & A_{N-1} & B_N \\ B_1 & 0 & \hdots & \hdots & 0 & A_N \end{pmatrix} \begin{pmatrix} \alpha_1 \\ \vdots \\ \alpha_N \end{pmatrix} = 0.
\end{align}
By computing the determinant of the system's matrix we find that $\lambda = - \kappa^2$ is a negative eigenvalue if and only if
\begin{align*}
 A_1 \cdots A_N + (-1)^{N+1} B_1 \cdots B_N = 0.
\end{align*}
If $N = 3$, then $\lambda = - \kappa^2$ is a negative eigenvalue if and only if
\begin{align}\label{eq:secularEquationN=3}
 \coth (\kappa l_1) \coth (\kappa l_2) + \coth (\kappa l_1) \coth (\kappa l_3) + \coth (\kappa l_2) \coth (\kappa l_3) = \kappa^2
\end{align}
holds. For $N \geq 4$ the explicit form of the condition becomes more complicated. In the particular case when the star graph is equilateral with edge lengths $l$, then $\lambda = - \kappa^2$ is a negative eigenvalue if and only if
\begin{align}\label{eq:secularEquationEquilateral}
 (\cosh (\kappa l) + i \kappa \sinh (\kappa l))^N + (-1)^{N+1} (- \cosh (\kappa l) + i \kappa \sinh (\kappa l))^N = 0.
\end{align}

Let us now consider a few explicit choices for $N$, assuming furthermore that the star graph is equilateral. If $N = 3$, the equation \eqref{eq:secularEquationEquilateral} is equivalent to
\begin{align}\label{eq:implicit}
 3 \coth^2 (\kappa l) = \kappa^2.
\end{align}
In accordance with Proposition \ref{p:neg_exist} and Remark~\ref{rem:inf_bound} a solution must exist; it is straightforward to check that it is the case and the solution is unique for any $l>0$. If $l = 1$, for instance, we have $\kappa \approx 1.82$ corresponding to the negative eigenvalue $\lambda \approx - 3.33$. The coefficients of the corresponding eigenfunction can be computed, for any $l>0$, as follows. If $\kappa$ is the solution to \eqref{eq:implicit}, then
\begin{align*}
 \cosh (\kappa l) = \frac{\kappa/\sqrt{3}}{\sqrt{\kappa^2/3 - 1}} \quad \text{and} \quad \sinh (\kappa l) = \frac{1}{\sqrt{\kappa^2/3 - 1}},
\end{align*}
yielding
\begin{align*}
 A = \frac{\kappa}{\sqrt{\kappa^2/3 - 1}} \left(1/\sqrt{3} + i \right) \quad \text{and} \quad B = \frac{\kappa}{\sqrt{\kappa^2/3 - 1}} \left(- 1/\sqrt{3} + i \right).
\end{align*}
Now, returning to the system \eqref{eq:LGS} it is easy to see that
\begin{align*}
 \alpha_1 = \ee^{\frac{2 \pi i}{3}}, \quad \alpha_2 = \ee^{\frac{4 \pi i}{3}}, \quad \alpha_3 = 1
\end{align*}
is a solution and that it is unique up to multiples. In other words, there is a unique simple eigenvalue for any $l>0$, and the corresponding eigenfunction is supported on the entire graph. Moreover, it is clear from \eqref{eq:implicit} that this eigenvalue converges from below to $-3$, the eigenvalue of the infinite star graph \cite{ET18}, as $l\to\infty$.

For $N = 4$ one proceeds in a similar way. The secular equation acquires the form
\begin{align*}
 \coth^2 (\kappa l) = \kappa^2
\end{align*}
giving rise to a unique negative eigenvalue $\lambda = - \kappa^2$ is unique for any $l>0$ which, in addition, converges from below to $-1$ for $l\to\infty$ as expected \cite{ET18}. The coefficients of its corresponding eigenfunction can be chosen
\begin{align*}
 \alpha_1 = i, \quad \alpha_2 = - 1, \quad \alpha_3 = - i, \quad \alpha_4 = 1.
\end{align*}
For $l = 1$ one gets $\kappa \approx 1.20$. To add one more case, consider $N = 6$. Here the secular equation can be written as
\begin{align*}
 3 \coth^4 (\kappa l) - 10 \coth^2 (\kappa l) \kappa^2 + 3 \kappa^4 = 0.
\end{align*}
By solving a quadratic equation we obtain
\begin{align*}
 \coth^2 (\kappa l) = \Big( \frac{5}{3} \pm \frac{4}{3} \Big) \kappa^2,
\end{align*}
leading to the two distinct equations
\begin{align*}
 \coth (\kappa l) = \frac{\kappa}{\sqrt{3}} \quad \text{and} \quad \coth (\kappa l) = \sqrt{3} \kappa
\end{align*}
and, thus, to two distinct negative eigenvalues. In particular, for $l = 1$ the solutions are $\kappa \approx 1.82$ and $\kappa \approx 0.84$, respectively, and in the limit $l\to\infty$ the eigenvalues tend to $3$ and $\frac13$, respectively, as expected \cite{ET18}. For the first equation we can compute the coefficients of the eigenfunction to be
\begin{align*}
 \alpha_1 = \alpha_4 = \ee^{\frac{4 \pi i}{3}}, \quad \alpha_2 = \alpha_5 = \ee^{\frac{2 \pi i}{3}}, \quad \alpha_3 = \alpha_6 = 1,
\end{align*}
while for the second equation we get $\alpha_j = \ee^{\frac{j \pi i}{3}},\, j = 1, \dots, 6$, hence again the eigenfunction components on different edges differ by coefficients which are roots of unity. Note also that in all three cases the lower bound of Remark~\ref{rem:inf_bound} is saturated.

\end{example}


\section{Optimizing the ground state eigenvalue of a star graph}\label{sec:groundState}

In this section we study the behavior of the lowest eigenvalue on a star graph when the edge lengths are varied but the total length is fixed, in other words, when the volume is preserved but its parts are transplanted between edges. We denote by $\lambda_1 (\Gamma) < 0$ the first eigenvalue of the Laplacian on the metric star graph $\Gamma$ with matching conditions \eqref{eq:conditions} at the central vertex and Neumann boundary conditions at the `loose ends', i.e. the vertices of degree one.
\begin{definition}
Let $\Gamma$ be a metric star graph with $N$ edges $e_1, \dots, e_N$ of lengths $l_1, \dots, l_N$. We say that {\em $\widetilde \Gamma$ is obtained from $\Gamma$ by transplanting a segment of length $\ell \leq l_j$ from the edge $e_j$ to the edge $e_k$} if $\widetilde \Gamma$ is the metric star graph with $N$ edges of lengths $\widetilde l_1, \dots, \widetilde l_N$, where $\widetilde l_j = l_j - \ell$, $\widetilde l_k = l_k + \ell$ and $\widetilde l_n = l_n$ whenever $n \neq j, k$.
\end{definition}
In the previous definition the case $\ell = l_j$ is formally admitted. In this case, $\widetilde l_j = 0$ and we identify $\widetilde \Gamma$ with a star graph of $N - 1$ edges.
\begin{proposition}\label{prop:transplantation}
Let $\Gamma$ be a star graph with $N \geq 3$ edges. Assume that the star graph $\widetilde \Gamma$ is obtained from $\Gamma$ by transplanting a segment of length $\ell < l_j$ from the edge $e_j$ to the edge $e_k$. Then
\begin{align}\label{eq:transplantation}
 \lambda_1 (\widetilde \Gamma) < \lambda_1 (\Gamma)
\end{align}
holds if $l_j \leq l_k$. The same assertion remains true if $\ell = l_j$ and $N$ is even.
\end{proposition}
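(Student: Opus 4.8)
The plan is to rewrite the secular equation of Example~\ref{ex:computation} as a statement about a single monotone function and then conclude via strict concavity. First I would record the following reformulation. On each edge the only solution up to scalars of $-\psi_j''=-\kappa^2\psi_j$, $\kappa>0$, satisfying the Neumann condition at the loose end is $\cosh(\kappa(l_j-x))$, so its data at $v_0$ obey $F_j'=-t_jF_j$ with $t_j:=\kappa\tanh(\kappa l_j)>0$. Substituting this into \eqref{eq:conditions} turns the vertex conditions into $F_{j+1}(1-it_{j+1})=F_j(1+it_j)$ for every $j$; multiplying these relations around the cycle (if one $F_i$ vanished then all would, forcing $\psi\equiv0$, so this is legitimate) shows that $-\kappa^2$ is an eigenvalue precisely when $\prod_{j=1}^N(1+it_j)=\prod_{j=1}^N(1-it_j)$, equivalently
\[
 \Theta_{\mathbf l}(\kappa):=\sum_{j=1}^N\arctan\!\big(\kappa\tanh(\kappa l_j)\big)\in\pi\mathbb{Z}
\]
(for even $N$ the constraint \eqref{eq:antiKirchhoff} in $\dom\sa$ comes for free here). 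Since each $t_j$ is strictly increasing in $\kappa$ with $t_j\to0$ as $\kappa\to0^+$ and $t_j\to\infty$ as $\kappa\to\infty$, the map $\Theta_{\mathbf l}$ is a continuous strictly increasing bijection of $(0,\infty)$ onto $(0,N\pi/2)$. Hence $\lambda_1(\Gamma)=-\kappa_1^2$, where $\kappa_1$ is the \emph{largest} root of $\Theta_{\mathbf l}(\kappa)\in\pi\mathbb{Z}$; writing $\Theta_{\mathbf l}(\kappa_1)=m\pi$, the positive integer $m$ satisfies $m<N/2$, hence $m\le\lfloor(N-1)/2\rfloor$.

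Next I would compare $\Gamma$ and $\widetilde\Gamma$ at the fixed value $\kappa=\kappa_1$. With $h(l):=\arctan(\kappa_1\tanh(\kappa_1 l))$ only the $j$-th and $k$-th summands change, so $\Theta_{\widetilde{\mathbf l}}(\kappa_1)=m\pi+\big(h(l_j-\ell)+h(l_k+\ell)\big)-\big(h(l_j)+h(l_k)\big)$. A short computation with $u=\tanh(\kappa_1 l)\in[0,1)$ gives $h'(l)=\kappa_1^2\,\frac{1-u^2}{1+\kappa_1^2u^2}$, and $\frac{d}{du}\frac{1-u^2}{1+\kappa_1^2u^2}=\frac{-2u(1+\kappa_1^2)}{(1+\kappa_1^2u^2)^2}<0$; since $u$ increases with $l$ this makes $h'$ strictly decreasing, i.e.\ $h$ strictly concave on $[0,\infty)$. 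Because $l_j-\ell<l_j\le l_k<l_k+\ell$ while $(l_j-\ell)+(l_k+\ell)=l_j+l_k$, the pair $(l_j-\ell,l_k+\ell)$ is a strict spreading of $(l_j,l_k)$, so strict concavity yields $h(l_j-\ell)+h(l_k+\ell)<h(l_j)+h(l_k)$ and therefore $\Theta_{\widetilde{\mathbf l}}(\kappa_1)<m\pi$. As $\Theta_{\widetilde{\mathbf l}}$ is again a continuous strictly increasing bijection onto $(0,N\pi/2)$ and $m\pi<N\pi/2$, there is $\widetilde\kappa>\kappa_1$ with $\Theta_{\widetilde{\mathbf l}}(\widetilde\kappa)=m\pi$, so $-\widetilde\kappa^2$ is a negative eigenvalue of $\widetilde\Gamma$ and $\lambda_1(\widetilde\Gamma)\le-\widetilde\kappa^2<-\kappa_1^2=\lambda_1(\Gamma)$.

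For the case $\ell=l_j$ with $N$ even the same computation applies, now with $\widetilde\Gamma$ an $(N-1)$-star: the $j$-th summand disappears because $\tanh0=0$, so $\Theta_{\widetilde{\mathbf l}}$ maps onto $(0,(N-1)\pi/2)$, and concavity still gives $\Theta_{\widetilde{\mathbf l}}(\kappa_1)=m\pi+h(0)+h(l_j+l_k)-h(l_j)-h(l_k)<m\pi$. The only place where parity matters is the requirement $m\pi<(N-1)\pi/2$: since $m\le\lfloor(N-1)/2\rfloor$, this holds exactly when $N$ is even — for odd $N$ it fails, in accordance with the fact that $\lambda_1$ escapes to $-\infty$ as $\ell\uparrow l_j$ — and then one again produces $\widetilde\kappa>\kappa_1$ with $\Theta_{\widetilde{\mathbf l}}(\widetilde\kappa)=m\pi$, giving the claim. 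I expect the only genuine computation to be the sign of $h''$; the substantive points are the observation that the secular condition is carried by the single increasing quantity $\Theta_{\mathbf l}$, so that transplanting length from a shorter to a longer edge — which spreads $(l_j,l_k)$ apart — pushes $\Theta$ down and $\kappa_1$ up, and the bookkeeping that ties $m$ to the parity of $N$ in the degenerate case.
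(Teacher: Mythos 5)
Your argument is correct, and it takes a genuinely different route from the paper's. The paper's proof is variational: it takes the explicit ground state $\psi_j(x)=\alpha_j\cosh(\kappa(l_j-x))$, literally transplants the outer piece of $\psi_j$ from the shorter edge onto the end of the longer one after multiplying by $\alpha_k/(\alpha_j\cosh(\kappa\ell))$, and uses the monotonicity of $|A_j|$ in $l_j$ only to get $|\alpha_k|\le|\alpha_j|$, so that neither the form value nor the $L^2$-norm of the trial function increases; strictness comes from the trial function violating the Neumann conditions at the loose ends. You instead work entirely with the secular equation, recasting $A_1\cdots A_N+(-1)^{N+1}B_1\cdots B_N=0$ as $\sum_j\arctan(\kappa\tanh(\kappa l_j))\in\pi\mathbb{Z}$ (which is correct, since $B_j=-\overline{A_j}$ and $\cosh(\kappa l_j)>0$) and exploiting that this phase sum is strictly increasing in $\kappa$ and strictly concave in each edge length, so that spreading the pair $(l_j,l_k)$ lowers the phase at fixed $\kappa$ and pushes the largest root to the right. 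I checked the key steps — the legitimacy of cancelling $\prod_j F_j$, the sign of $h''$, the concavity inequality for $l_j-\ell<l_j\le l_k<l_k+\ell$ with preserved sum, and the range bookkeeping $m\pi<(N-1)\pi/2$ in the degenerate case, which is exactly where the parity hypothesis enters (in the paper it enters instead as membership of the truncated trial function in the form domain of the $(N-1)$-star) — and the proof is complete. Your approach buys more in this specific setting: it tracks in principle the motion of every negative secular root, not just the lowest, and it makes the count $\lfloor(N-1)/2\rfloor$ of Remark~\ref{rem:inf_bound} transparent, as well as the blow-up $\lambda_1\to-\infty$ as $\ell\uparrow l_j$ for odd $N$. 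Its drawback is that it is wedded to the explicit solvability of the star-graph secular equation, whereas the paper's trial-function technique is the one that generalizes to the surgery principles of Sections~\ref{s: surgery} and~\ref{sec:generalGraphs}.
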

\begin{proof}
Without loss of generality let us assume that the transplantation is done from the edge $e_1$ to the edge $e_2$ and that $l_1 \leq l_2$. We first assume that $\ell < l_1$; the case $\ell = l_1$ will be discussed below. Let $\psi$ be an eigenfunction on $\Gamma$ corresponding to the eigenvalue $\lambda_1 (\Gamma)$ so that, in particular,
\begin{align*}
 \sa [\psi] = \lambda_1 (\Gamma) \|\psi\|_{L^2 (\Gamma)}^2.
\end{align*}
From Example \ref{ex:computation} we know that its component on the edge $e_j$ is given by
\begin{align*}
 \psi_j (x) = \alpha_j \cosh (\kappa (l_j - x)), \quad 0 \leq x \leq l_j,
\end{align*}
where $\kappa$ is positive with $\lambda_1 (\Gamma) = - \kappa^2$ and the complex coefficients $\alpha_j$ satisfy
\begin{align}\label{eq:coefficientRelations}
 A_j \alpha_j + B_{j+1} \alpha_{j+1} = 0, \quad j = 1, \dots, N,
\end{align}
using the `cyclic' convention identifying $A_{N+1}$ with $A_1$, etc. Recall that the coefficients are given by
\begin{align*}
 A_j = \cosh (\kappa l_j) + i \kappa \sinh (\kappa l_j)
\end{align*}
and
\begin{align*}
 B_j = - \cosh (\kappa l_j) + i \kappa \sinh (\kappa l_j)
\end{align*}
for $j = 1, \dots, N$. This means, in particular, that
\begin{align*}
 |A_j|^2 = |B_j|^2 = \cosh^2 (\kappa l_j) + \kappa^2 \sinh^2 (\kappa l_j).
\end{align*}
Since
\begin{align*}
 \frac{\textup{d}}{\textup{d} x} \left(\cosh^2 (\kappa x) + \kappa^2 \sinh^2 (\kappa x)\right) = 2 \kappa \cosh (\kappa x) \sinh (\kappa x) (1 + \kappa^2) > 0
\end{align*}
holds for $x > 0$, we infer that $|A_j|$ and $|B_j|$ are strictly increasing as functions of the edge length $l_j$. This means that
\begin{align*}
 |B_2| = |A_2| \geq |A_1|
\end{align*}
holds if $l_1 \leq l_2$. In combination with \eqref{eq:coefficientRelations} this gives
\begin{align}\label{eq:orderRelation}
 1 \geq \frac{|A_1|}{|B_2|} = \frac{|\alpha_2|}{|\alpha_1|}.
\end{align}

To prove the inequality \eqref{eq:transplantation} we construct a suitable trial function for the quadratic form $\widetilde \sa$, the counterpart of $\sa$ for $\widetilde \Gamma$. Similarly as for $\Gamma$, we assume that the edges of $\widetilde \Gamma$ are parametrized by the intervals $[0, \widetilde l_j]$ with the zero endpoint corresponding to the central vertex. We define a trial function $\widetilde \psi$ on $\widetilde \Gamma$ by setting
\begin{align*}
 & \widetilde \psi_1 (x) = \psi_1 (x), \quad x \in [0, \widetilde l_1], \\
 & \widetilde \psi_2 (x) = \begin{cases} \psi_2 (x), & x \in [0, l_2], \\ \frac{\alpha_2}{\alpha_1 \cosh (\kappa \ell)} \psi_1 (x - l_2 + \widetilde l_1), & x \in [l_2, \widetilde l_2], \end{cases},
\end{align*}
together with $\widetilde \psi_j = \psi_j$ for $j = 3, \dots, N$. That is, we transplant a piece of the eigenfunction from $e_1$ to $e_2$, multiplied by some scaling factor. Then we have $\widetilde \Psi_j := \widetilde \psi_j (0) = \psi_j (0) =: \Psi_j$, $j = 1, \dots, N$, and $\widetilde \psi_j \in H^1 (0, \widetilde l_j)$ for $j \neq 2$. Furthermore, $\widetilde \psi_2$ is well-defined and continuous at $x = l_2$ since
\begin{align*}
  \frac{\alpha_2}{\alpha_1 \cosh (\kappa \ell)} \psi_1 (l_2 - l_2 + \widetilde l_1) = \alpha_2 = \psi_2 (l_2).
\end{align*}
Thus  $\widetilde \psi_2 \in H^1 (0, \widetilde l_2)$ as well, and consequently, $\widetilde \psi \in \dom \widetilde \sa$. Furthermore, using \eqref{eq:orderRelation} and the property $\widetilde \psi_j (0) = \psi_j (0)$ again for $j = 1, \dots, N$, we infer that
\begin{align*}
 \widetilde \sa [\widetilde \psi] & = \int_0^{l_1 - \ell} |\psi_1'|^2 {\,\dd x} + \int_0^{l_2} |\psi_2'|^2 {\,\dd x} + \left| \frac{\alpha_2}{\alpha_1 \cosh (\kappa \ell)} \right|^2 \int_{l_2}^{l_2 + \ell} |\psi_1' (x - l_2 + \widetilde l_1)|^2 {\,\dd x} \\
 & \quad + \sum_{j = 3}^N \int_0^{l_j} |\psi_j'|^2 {\,\dd x} + \textup{vertex terms} \\
 & \leq \sa [\psi] = \lambda_1 (\Gamma) \|\psi\|_{L^2 (\Gamma)}^2
\end{align*}
using $\cosh (\kappa \ell) > 1$; note that the vertex terms of $\widetilde \sa [\widetilde \psi]$ and $\sa [\psi]$ coincide. On the other hand, using \eqref{eq:orderRelation} once more, we get
\begin{align*}
 \|\psi\|^2 & \geq \int_0^{l_1 - \ell} |\psi_1|^2 {\,\dd x} + \int_0^{l_2} |\psi_2|^2 {\,\dd x} + \left| \frac{\alpha_2}{\alpha_1 \cosh (\kappa \ell)} \right|^2 \int_{l_2}^{l_2 + \ell} |\psi_1 (x - l_2 + \widetilde l_1)|^2 {\,\dd x} \\
 & \quad + \sum_{j = 3}^N \int_0^{l_j} |\psi_j|^2 {\,\dd x} \\
 & = \|\widetilde \psi\|_{L^2 (\widetilde \Gamma)}^2.
\end{align*}
As $\lambda_1 (\Gamma) < 0$, this implies
\begin{align*}
 \widetilde \sa [\widetilde \psi] \leq \lambda_1 (\Gamma) \|\widetilde \psi\|_{L^2 (\widetilde \Gamma)}^2.
\end{align*}
Since $\widetilde \psi$ cannot be an eigenfunction of $\widetilde \Gamma$ (because, for instance, it does not satisfy Neumann boundary conditions at the vertices of degree one corresponding to $e_1$ and $e_2$ in $\widetilde \Gamma$), it follows $\lambda_1 (\widetilde \Gamma) < \lambda_1 (\Gamma)$.

Let us briefly discuss the adjustments which need to be made to adapt this proof to the case $\ell = l_1$. We can construct the trial function $\widetilde \psi$ in the same way as above. Since we are assuming now that $N$ is even, the resulting graph $\widetilde \Gamma$ has an odd number of edges, and consequently, we still have $\widetilde \psi \in \dom \widetilde \sa$. Furthermore, the original eigenfunction $\psi$ on $\Gamma$ satisfies
\begin{align*}
 \sum_{j=1}^N (-1)^j \Psi_j = 0.
\end{align*}
Hence the vertex terms of $\widetilde \sa [\widetilde \psi]$ can be written
\begin{align*}
 - 2 \Imag \sum_{j = 3}^{N} \sum_{k = 2}^{j - 1} (-1)^{j + k} \left( \Psi_k \overline{\Psi_j} \right) & = - 2 \Imag \sum_{j = 2}^N \left( \sum_{k = 1}^{j-1} (-1)^{j + k} (\Psi_k \overline{\Psi_j}) - (-1)^{j+1} (\Psi_1 \overline{\Psi_j}) \right) \\
 & = - 2 \Imag \left( \sum_{j = 2}^N \sum_{k = 1}^{j-1} (-1)^{j + k} (\Psi_k \overline{\Psi_j}) + |\Psi_1|^2 \right) \\
 & = - 2 \Imag \sum_{j = 2}^N \sum_{k = 1}^{j-1} (-1)^{j + k} (\Psi_k \overline{\Psi_j}),
\end{align*}
which is precisely the vertex terms of $\sa [\psi]$. Thus the proof still works when $\ell = l_1$ and $N$ is even.
\end{proof}

As a consequence, we obtain the following result.
\begin{corollary}\label{cor:max}
Assume that $\Gamma$ is an arbitrary metric star graph with $N$ edges and $\Gamma^*$ is the equilateral star graph with $N$ edges and the same total length as $\Gamma$. Then
\begin{align*}
 \lambda_1 (\Gamma) \leq \lambda_1 (\Gamma^*)
\end{align*}
holds. Equality prevails if and only if $\Gamma$ is equilateral.
\end{corollary}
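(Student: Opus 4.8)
The plan is to deduce the corollary from Proposition~\ref{prop:transplantation} by steering an arbitrary star graph $\Gamma$ towards the equilateral one through a finite chain of transplantation moves, each of which strictly raises the ground state eigenvalue unless the graph is already equilateral. Here and below $N \geq 3$. Write $L$ for the total edge length of $\Gamma$ and $\ell^* := L/N$ for the common edge length of $\Gamma^*$. If every edge of $\Gamma$ has length $\ell^*$ there is nothing to prove; otherwise, since the edge lengths $l_1,\dots,l_N$ average to $\ell^*$, we have $\min_n l_n < \ell^* < \max_n l_n$, with both inequalities strict.

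Given such a non-equilateral $\Gamma$, fix distinct indices $j,k$ with $l_j = \min_n l_n$ and $l_k = \max_n l_n$, set $\ell := \min\{\ell^* - l_j,\, l_k - \ell^*\} > 0$, and let $\widetilde\Gamma$ be obtained from $\Gamma$ by transplanting a segment of length $\ell$ from $e_k$ to $e_j$. I claim $\lambda_1(\widetilde\Gamma) > \lambda_1(\Gamma)$. The point is to read this move backwards: $\Gamma$ is obtained from $\widetilde\Gamma$ by transplanting a segment of length $\ell$ from the edge of $\widetilde\Gamma$ of length $l_j + \ell$ to the edge of length $l_k - \ell$. Since $2\ell \le (\ell^* - l_j) + (l_k - \ell^*) = l_k - l_j$, the source edge of this backward move is no longer than its target edge, $l_j + \ell \le l_k - \ell$; moreover $\ell < l_j + \ell$ because $l_j > 0$. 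Hence Proposition~\ref{prop:transplantation} (in its main case, the transplanted length being strictly smaller than the source edge length) applies to the pair $(\widetilde\Gamma,\Gamma)$ and yields $\lambda_1(\Gamma) < \lambda_1(\widetilde\Gamma)$, as claimed. Note that in this process no edge is ever shrunk to length zero, so every graph that occurs is a genuine $N$-star.

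Finally, the chain terminates: by the choice of $\ell$ at least one of the two affected edges of $\widetilde\Gamma$ has length exactly $\ell^*$, and no edge previously equal to $\ell^*$ is altered, so the number of edges of length $\ell^*$ is strictly larger for $\widetilde\Gamma$ than for $\Gamma$. Iterating the construction, after at most $N-1$ transplantations every edge has length $\ell^*$ --- i.e.\ the graph has become $\Gamma^*$ --- and $\lambda_1$ has strictly increased at each step. This gives $\lambda_1(\Gamma) < \lambda_1(\Gamma^*)$ whenever $\Gamma \neq \Gamma^*$, while $\lambda_1(\Gamma) = \lambda_1(\Gamma^*)$ trivially when $\Gamma = \Gamma^*$; this is exactly the assertion.

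The only genuinely non-routine point --- the reason the corollary is not an immediate rephrasing of Proposition~\ref{prop:transplantation} --- is that increasing $\lambda_1$ corresponds to moving length \emph{from} the longer edge \emph{onto} the shorter one, so one has to apply the proposition in reverse, regarding the original graph (rather than the modified one) as the outcome of a transplantation, and then verify that the ``source $\le$ target'' hypothesis holds for the appropriate edges. Everything else is bookkeeping, and the equality statement drops out because every single move is strict.
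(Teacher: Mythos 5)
Your proof is correct and takes essentially the same route as the paper: reduce to Proposition~\ref{prop:transplantation}, applied in reverse, along a finite chain of balancing transplantations that terminates at $\Gamma^*$ and is strictly monotone in $\lambda_1$ at every step. Your particular scheduling (always from the current longest to the current shortest edge, with $\ell$ chosen so that one of the two edges lands exactly on the mean) is in fact a bit more careful than the paper's distribution scheme, since it explicitly verifies the hypothesis $l_j+\ell\le l_k-\ell$ needed to read the proposition backwards at each step.
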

\begin{proof}
According to Proposition \ref{prop:transplantation}, transplanting a piece of positive length of a longer edge to a shorter one increases the lowest eigenvalue strictly. Starting from $\Gamma$ we can indeed perform a sequence of transplantations of this type and eventually arrive at $\Gamma^*$. In fact, assume without loss of generality that the edges $e_1, \dots, e_N$ of $\Gamma$ are enumerated such that
\begin{align*}
 l_1 \leq l_2 \leq \dots \leq l_K < \overline{l} \leq l_{K+1} \leq \dots \leq l_N
\end{align*}
holds for some index $K$, where $\overline{l} = \frac{1}{N} \sum_{j = 1}^N l_j$ is the average edge length. We first choose the longest edge, $e_N$, and transplant subsequently $K$ pieces of the same length $(l_N - \overline l)/K$ from $e_N$ to each of the edges $e_1, \dots, e_K$. As each of these edges has initially been shorter than $\overline l$ and $e_N$ remains longer or equally long as $\overline l$ during each of these transplantations, the lowest eigenvalue increases in each of these steps. The result is a star graph $\Gamma'$ with $N$ edges and the same total length as $\Gamma$, in which the edge $e'_N$ has length $\overline l$. If $\Gamma'$ is equilateral, we are done. Otherwise we apply the same procedure as before to reduce the length of the longest edge of $\Gamma'$ to $\overline{l}$, thereby increasing the lowest eigenvalue. As every application of this procedure reduces one more edge to the average length, after finitely many steps we arrive at $\Gamma^*$, proving in this way that $\lambda_1 (\Gamma) < \lambda_1 (\Gamma^*)$ holds unless $\Gamma$ was equilateral from the beginning.
\end{proof}

In the next step, we compare equilateral star graphs of fixed total length with varying number of edges.
\begin{proposition}\label{prop:manyInequalities}
Let $L > 0$ be fixed. For each $n$ denote by $\Gamma_n^*$ the equilateral star graph with $n$ edges where each edge has length $L/n$, that is, the total length of $\Gamma_n^*$ is $L$. For each $N \geq 3$ the following assertions hold.
\begin{enumerate}
 \item $\lambda_1 (\Gamma_{N+2}^*) < \lambda_1 (\Gamma_N^*)$;
 \item $\lambda_1 (\Gamma_N^*) < \lambda_1 (\Gamma_{N+1}^*)$ if $N$ is odd; and
 \item $\lambda_1 (\Gamma_{N+1}^*) < \lambda_1 (\Gamma_N^*)$ if $N$ is even.
\end{enumerate}
\end{proposition}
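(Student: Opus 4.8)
The plan is to reduce all three inequalities to a comparison between the numbers $\kappa_n>0$ defined by $\lambda_1(\Gamma_n^*)=-\kappa_n^2$, which I extract from the equilateral secular equation of Example~\ref{ex:computation}. Applying \eqref{eq:secularEquationEquilateral} with edge length $l=L/n$ and setting, as there, $A=\cosh(\kappa l)+i\kappa\sinh(\kappa l)$ and $B=-\cosh(\kappa l)+i\kappa\sinh(\kappa l)$, the equation becomes $(A/B)^n=(-1)^n$; since $A/B$ has modulus $1$ and argument $2\arctan(\kappa\tanh(\kappa l))-\pi$, this is equivalent to $\arctan(\kappa\tanh(\kappa L/n))=m\pi/n$ for some integer $m$ with $1\le m\le\lfloor(n-1)/2\rfloor$. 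As $\kappa\mapsto\kappa\tanh(\kappa L/n)$ is a strictly increasing bijection of $(0,\infty)$ onto itself, each admissible $m$ gives exactly one root and larger $m$ gives larger $\kappa$, so the ground state corresponds to $m=m_n:=\lfloor(n-1)/2\rfloor$ (in accordance with Remark~\ref{rem:inf_bound}). I set $\theta_n:=m_n\pi/n$ and $f_n(\kappa):=\arctan(\kappa\tanh(\kappa L/n))$, so that $f_n$ is a strictly increasing bijection $(0,\infty)\to(0,\pi/2)$, is strictly decreasing in $n$ for fixed $\kappa$, and $\kappa_n=f_n^{-1}(\theta_n)$. A short computation gives $\tfrac\pi2-\theta_n=\tfrac{\pi}{2n}$ for odd $n$ and $\tfrac\pi2-\theta_n=\tfrac{\pi}{n}$ for even $n$; from this one reads off $\theta_{N+2}>\theta_N$ for all $N$, $\theta_{N+1}>\theta_N$ when $N$ is even, and $\theta_{N+1}<\theta_N$ when $N$ is odd.

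For (i) and (iii) the two geometric effects point the same way. Taking $N'=N+2$ in (i), and $N'=N+1$ in (iii) (where $N$ is even), one has $\theta_{N'}>\theta_N$, hence
\begin{align*}
 f_{N'}(\kappa_N)<f_N(\kappa_N)=\theta_N<\theta_{N'}=f_{N'}(\kappa_{N'}),
\end{align*}
using that $f_n$ decreases in $n$ for the first inequality; strict monotonicity of $f_{N'}$ then forces $\kappa_{N'}>\kappa_N$, i.e.\ $\lambda_1(\Gamma_{N'}^*)<\lambda_1(\Gamma_N^*)$. This disposes of (i) and (iii).

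For (ii), $N$ is odd, so $\theta_{N+1}<\theta_N$ while the edge length shrinks from $L/N$ to $L/(N+1)$: the two effects compete, and this is the heart of the matter. It suffices to prove $f_N(\kappa_{N+1})<\theta_N=f_N(\kappa_N)$, i.e.\ $\kappa_{N+1}\tanh(\kappa_{N+1}L/N)<\tan\theta_N=\cot\tfrac{\pi}{2N}$. Putting $s=\kappa_{N+1}L/(N+1)$, so $\kappa_{N+1}L/N=\tfrac{N+1}{N}s$, and using that $\tanh$ is concave on $[0,\infty)$ with $\tanh 0=0$ (hence $\tanh(\beta s)\le\beta\tanh s$ for $\beta\ge1$), I obtain
\begin{align*}
 \kappa_{N+1}\tanh(\kappa_{N+1}L/N)\le\frac{N+1}{N}\,\kappa_{N+1}\tanh s=\frac{N+1}{N}\tan\theta_{N+1}=\frac{N+1}{N}\cot\frac{\pi}{N+1}.
\end{align*}
It then remains to verify the elementary inequality $\tfrac{N+1}{N}\cot\tfrac{\pi}{N+1}<\cot\tfrac{\pi}{2N}$ for every integer $N\ge3$. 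Writing both sides through $g(t):=t\cot t$, which is strictly decreasing on $(0,\pi)$ because $g'(t)=(\tfrac12\sin 2t-t)/\sin^2 t<0$, one has
\begin{align*}
 \frac{N+1}{N}\cot\frac{\pi}{N+1}=\frac{(N+1)^2}{N\pi}\,g\Bigl(\frac{\pi}{N+1}\Bigr)<\frac{(N+1)^2}{N\pi}\,g\Bigl(\frac{\pi}{2N}\Bigr)<\frac{2N}{\pi}\,g\Bigl(\frac{\pi}{2N}\Bigr)=\cot\frac{\pi}{2N},
\end{align*}
the first inequality using $\tfrac{\pi}{N+1}>\tfrac{\pi}{2N}$ and the second using $(N+1)^2<2N^2$, both valid for $N\ge3$. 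Combining the two displays yields $\kappa_{N+1}<\kappa_N$, i.e.\ $\lambda_1(\Gamma_N^*)<\lambda_1(\Gamma_{N+1}^*)$.

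I expect the main obstacle to be precisely (ii): there the two surgery effects push the ground state in opposite directions, so the soft monotonicity argument that settles (i) and (iii) breaks down and a quantitative bound is required. The concavity estimate $\tanh(\beta s)\le\beta\tanh s$ is the key point that turns the transcendental comparison into the elementary inequality $\tfrac{N+1}{N}\cot\tfrac{\pi}{N+1}<\cot\tfrac{\pi}{2N}$, which is then handled by monotonicity of $t\mapsto t\cot t$; the reduction in the first paragraph and the remaining cases are routine.
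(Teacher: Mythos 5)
Your argument is correct, but it takes a genuinely different route from the paper. The paper proves (i) and (iii) variationally: it truncates the ground-state eigenfunction of $\Gamma_N^*$ to the shorter edge length and extends it by zero to the added edge(s), checks that the resulting trial function lies in the form domain (the parity bookkeeping is exactly why one or two edges are added), and uses $\lambda_1<0$ together with the strict decrease of the $L^2$-norm to push the Rayleigh quotient strictly down; part (ii) is then deduced by repeatedly invoking the transplantation result of Proposition~\ref{prop:transplantation}, including its degenerate case $\ell=l_j$ (removing an edge entirely), which is admissible precisely because $N+1$ is even. You instead solve the equilateral secular equation \eqref{eq:secularEquationEquilateral} completely: your identification of the negative spectrum with the solutions of $\arctan(\kappa\tanh(\kappa L/n))=m\pi/n$, $1\le m\le\lfloor(n-1)/2\rfloor$, is correct (it reproduces the $N=3,4,6$ computations of Example~\ref{ex:computation} and saturates the count of Remark~\ref{rem:inf_bound}), and your monotonicity argument for (i) and (iii) and the concavity estimate $\tanh(\beta s)\le\beta\tanh s$ reducing (ii) to $\tfrac{N+1}{N}\cot\tfrac{\pi}{N+1}<\cot\tfrac{\pi}{2N}$ (settled via the decrease of $t\mapsto t\cot t$, using $(N+1)^2<2N^2$ for $N\ge3$) all check out. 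The trade-off: the paper's variational route is softer, fits the surgery philosophy of the rest of the article, and in (ii) recycles a result already proved, whereas your explicit route is self-contained given Example~\ref{ex:computation}, identifies which of the $\lfloor(n-1)/2\rfloor$ negative eigenvalues is the ground state, and is in principle quantitative — but it leans on the exact solvability of the equilateral case and so would not survive a perturbation of the setting the way the trial-function argument would. You correctly diagnose (ii) as the case where the two effects compete; that is also why the paper cannot handle it by the zero-extension trick and must fall back on transplantation.
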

\begin{proof}
Throughout this proof we denote by $\sa_n$ the quadratic form of Proposition \ref{prop:form} on $\Gamma_n^*$.

(i) Let $\psi$ be an eigenfunction on $\Gamma_N^*$ associated with $\lambda_1 (\Gamma_N^*)$. On $\Gamma_{N+2}^*$ we define a trial function $\widetilde \psi$ by
\begin{align*}
 \widetilde \psi_j (x) = \begin{cases}\psi_j (x), \quad x \in \left[0, \frac{L}{N + 2} \right], & j = 1, \dots, N, \\ 0, & j = N+1, N+2, \end{cases}
\end{align*}
that is, we truncate the eigenfunction of $\Gamma_N^*$ and extend it by zero to the additional edges. Then $\widetilde \psi_j \in H^1 (0, \frac{L}{N+2})$ for each $j$, and in case $N$ is even, $\widetilde \psi$ inherits the vertex conditions at $v_0$ from $\psi$. Thus, in any case, $\widetilde \psi \in \dom \sa_{N+2}$. We compute $\sa_{N+2} [\widetilde \psi]$ for even $N$. In this case we have
\begin{align*}
 \sa_{N+2} [\widetilde \psi] & = \sum_{j = 1}^{N+2} \int_0^{\frac{L}{N+2}} |\widetilde \psi_j'|^2 {\,\dd x} - 2 \sum_{j = 2}^{N+2} \sum_{k = 1}^{j - 1} (-1)^{j + k} \Imag \left( \widetilde \Psi_k \overline{\widetilde \Psi_j} \right) \\
 & < \sum_{j = 1}^{N} \int_0^{\frac{L}{N}} |\psi_j'|^2 {\,\dd x} - 2 \sum_{j = 2}^{N} \sum_{k = 1}^{j - 1} (-1)^{j + k} \Imag \left( \Psi_k \overline{\Psi_j} \right) \\
 & = \sa_N [\psi] = \lambda_1 (\Gamma_N^*) \|\psi\|_{L^2 (\Gamma)}^2 < \lambda_1 (\Gamma_N^*) \|\widetilde \psi\|_{L^2 (\widetilde \Gamma)}^2,
\end{align*}
where in the last step we used the fact that $\|\psi\|_{L^2 (\Gamma)} > \|\widetilde \psi\|_{L^2 (\widetilde \Gamma)}$ and $\lambda_1 (\Gamma_N^*) < 0$. This implies (i).

(ii) This follows from Proposition \ref{prop:transplantation}: one proceeds by transplanting successively pieces of length $\frac{L}{N (N + 1)}$ from the edge $e_{N+1}$ of $\Gamma_{N+1}$ to the edges $e_j$, $j = 1, \dots, N$; in the last of these steps, when the length of $e_{N+1}$ is `reduced to zero' we use the fact that $N+1$ is even by assumption.

(iii) In analogy with the proof of (i), let $\psi$ be an eigenfunction on $\Gamma_N^*$ corresponding to $\lambda_1 (\Gamma_N^*)$ and define $\widetilde \psi$ on $\Gamma_{N+1}^*$ by
\begin{align*}
 \widetilde \psi_j (x) = \begin{cases}\psi_j (x), \quad x \in \left[0, \frac{L}{N + 1} \right], & j = 1, \dots, N, \\ 0, & j = N+1. \end{cases}
\end{align*}
Then $\widetilde \psi \in \dom \sa_{N+1}$ (no vertex conditions are required as $N+1$ is odd) and
\begin{align*}
 \sa_{N+1} [\widetilde \psi] & = \sum_{j = 1}^{N+1} \int_0^{\frac{L}{N+1}} |\widetilde \psi_j'|^2 {\,\dd x} - 2 \sum_{j = 2}^{N+1} \sum_{k = 1}^{j-1} (-1)^{j + k} \Imag \left( \widetilde \Psi_k \overline{\widetilde \Psi_j} \right) \\
 & < \sum_{j = 1}^{N} \int_0^{\frac{L}{N}} |\psi_j'|^2 {\,\dd x} - 2 \sum_{j = 2}^{N} \sum_{k = 1}^{j-1} (-1)^{j + k} \Imag \left(\Psi_k \overline{\Psi_j} \right) \\
 & = \sa_N [\psi] = \lambda_1 (\Gamma_N^*) \|\psi\|^2 < \lambda_1 (\Gamma_N^*) \|\widetilde \psi\|^2,
\end{align*}
where the inequalities $\|\psi\|_{L^2 (\Gamma)} > \|\widetilde \psi\|_{L^2 (\widetilde \Gamma)}$ and $\lambda_1 (\Gamma_N^*) < 0$ were used. This yields assertion (iii) and concludes the proof.
\end{proof}

With this preliminary, one can obtain easily the following optimization theorem for the ground state eigenvalue.
\begin{theorem}\label{thm:optimizationGroundState}
Let $L > 0$ and let $\Gamma$ be a metric star graph with $N \geq 3$ edges and total length $L$. If we denote by $\Gamma_3^*$ and $\Gamma_4^*$ the equilateral star graphs with 3 and 4 edges, respectively, of total length $L$, then
\begin{align*}
 \lambda_1 (\Gamma) \leq \begin{cases} \lambda_1 (\Gamma_3^*), & \text{if}~N~\text{is odd}, \\ \lambda_1 (\Gamma_4^*), & \text{if}~N~\text{is even}, \end{cases}
\end{align*}
holds, with equality if and only if $\Gamma = \Gamma_3^*$ (if $N$ is odd) or $\Gamma = \Gamma_4^*$ (if $N$ is even).
\end{theorem}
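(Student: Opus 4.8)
The plan is to obtain Theorem~\ref{thm:optimizationGroundState} simply by assembling the two optimization results already at our disposal: Corollary~\ref{cor:max} reduces an arbitrary star graph to the equilateral one with the same number of edges, and Proposition~\ref{prop:manyInequalities}(i) lets us decrease the number of edges by two at a time while strictly raising $\lambda_1$. All the analytic content — the quadratic form computation, the transplantation argument, and the edge-number comparison — has been carried out in the preceding sections, so at this stage only a short chaining argument, with careful bookkeeping of the equality cases, remains.

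Concretely, fix $L > 0$ and a star graph $\Gamma$ with $N \geq 3$ edges of total length $L$. First I would invoke Corollary~\ref{cor:max} to get
\[
 \lambda_1(\Gamma) \leq \lambda_1(\Gamma_N^*),
\]
with equality precisely when $\Gamma$ is equilateral, i.e.\ $\Gamma = \Gamma_N^*$. Next, suppose $N$ is odd. Applying Proposition~\ref{prop:manyInequalities}(i) successively with $N-2, N-4, \dots, 3$ in place of its index (each step admissible since that index stays $\geq 3$) yields the finite chain of strict inequalities
\[
 \lambda_1(\Gamma_N^*) < \lambda_1(\Gamma_{N-2}^*) < \dots < \lambda_1(\Gamma_3^*),
\]
so that $\lambda_1(\Gamma_N^*) \leq \lambda_1(\Gamma_3^*)$, with equality if and only if $N = 3$. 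Combining the two displays gives $\lambda_1(\Gamma) \leq \lambda_1(\Gamma_3^*)$, and equality forces both $\Gamma = \Gamma_N^*$ and $N = 3$, i.e.\ $\Gamma = \Gamma_3^*$. If instead $N$ is even, the identical argument, with Proposition~\ref{prop:manyInequalities}(i) iterated down to index $4$, gives $\lambda_1(\Gamma_N^*) \leq \lambda_1(\Gamma_4^*)$ with equality iff $N = 4$, hence $\lambda_1(\Gamma) \leq \lambda_1(\Gamma_4^*)$ with equality iff $\Gamma = \Gamma_4^*$.

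There is no genuine obstacle here; the one point requiring attention is propagating strictness through the two steps, which is immediate because Corollary~\ref{cor:max} and Proposition~\ref{prop:manyInequalities}(i) both come with sharp equality characterisations. Note that parts (ii) and (iii) of Proposition~\ref{prop:manyInequalities} are not needed for this particular statement: they compare equilateral stars whose edge numbers differ by one (rather than by two) and thus record complementary information — for instance that, across parities, $\Gamma_4^*$ in fact dominates $\Gamma_3^*$.
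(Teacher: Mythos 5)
Your proposal is correct and follows essentially the same route as the paper's own proof: reduce to the equilateral $N$-star via Corollary~\ref{cor:max} and then iterate Proposition~\ref{prop:manyInequalities}(i) down to $3$ or $4$ according to parity, tracking the equality cases at each stage. Nothing is missing.
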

\begin{proof}
By Corollary \ref{cor:max} we have $\lambda_1 (\Gamma) \leq \lambda_1 (\Gamma_N^*)$ where $\Gamma_N^*$ is the equilateral star graph with $N$ edges and total length $L$, with equality if and only if $\Gamma = \Gamma_N^*$. Next we apply Proposition \ref{prop:manyInequalities} (i) to obtain
\begin{align*}
 \lambda_1 (\Gamma_N^*) \leq \begin{cases} \lambda_1 (\Gamma_3^*), & \text{if}~N~\text{is odd}, \\ \lambda_1 (\Gamma_4^*), & \text{if}~N~\text{is even}, \end{cases}
\end{align*}
the equality being valid if and only if $N = 3$ (if $N$ is odd) or $N = 4$ (if $N$ is even). The claim of the theorem follows.
\end{proof}

\section{Further surgery principles and spectral estimates on star graphs}
\label{s: surgery}

Using medical terminology again, we will go now beyond transplantations and prove some surgery principles in which volume is added to (respectively removed from) a star graph. We denote by
\begin{align*}
 \lambda_1 (\Gamma) \leq \lambda_2 (\Gamma) \leq \dots
\end{align*}
the eigenvalues of the Laplacian on a compact metric star graph $\Gamma$ with vertex conditions \eqref{eq:conditions} at the central vertex and Neumann boundary conditions otherwise, numerated with their multiplicities taken into account.
\begin{proposition}\label{prop:addEdge}
Assume that $\Gamma$ is a star graph with $N \geq 2$ edges and that the star graph $\widetilde \Gamma$ is obtained from $\Gamma$ by adding one edge rooted in the common vertex. Then
\begin{align*}
 \lambda_k (\widetilde \Gamma) \leq \lambda_k (\Gamma)
\end{align*}
holds
\begin{enumerate}
 \item for all $k \in \N$ if $N$ is even; and
 \item for all $k \in \N$ such that $\lambda_k (\Gamma) \geq 0$ if $N$ is odd.
\end{enumerate}
\end{proposition}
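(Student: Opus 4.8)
The plan is to prove both parts from the min--max characterisation of $\lambda_k(\widetilde\Gamma)$, exhibiting in each case an explicit $k$-dimensional subspace of $\dom\widetilde\sa$ on which the Rayleigh quotient of the form $\widetilde\sa$ from Proposition~\ref{prop:form} is bounded above by $\lambda_k(\Gamma)$. Fix $L^2(\Gamma)$-orthonormal eigenfunctions $\psi^{(1)},\dots,\psi^{(k)}$ of the operator on $\Gamma$ with $\sa[\psi^{(i)}]=\lambda_i(\Gamma)$, let $e_{N+1}$ (of some length $l_{N+1}>0$) denote the added edge, and for a function $u$ on $\widetilde\Gamma$ write $u|_\Gamma$ for its restriction to the first $N$ edges.

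For part (i), $N$ is even, hence $\widetilde\Gamma$ has an odd number of edges and $\dom\widetilde\sa=\widetilde H^1(\widetilde\Gamma)$ carries no vertex constraint. I would extend each $\psi^{(i)}$ by zero along $e_{N+1}$ to get $\widetilde\psi^{(i)}\in\dom\widetilde\sa$ and put $V=\spann\{\widetilde\psi^{(1)},\dots,\widetilde\psi^{(k)}\}$, which is $k$-dimensional since the restrictions to $\Gamma$ are. Because the $(N+1)$-st component vanishes identically, the Dirichlet integral over $e_{N+1}$ drops out and every vertex term of $\widetilde\sa$ carrying the index $N+1$ is zero, so for $u=\sum_i a_i\widetilde\psi^{(i)}$ one gets $\widetilde\sa[u]=\sa[\sum_i a_i\psi^{(i)}]=\sum_i\lambda_i(\Gamma)|a_i|^2$ and $\|u\|_{L^2(\widetilde\Gamma)}^2=\sum_i|a_i|^2$. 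Hence $\widetilde\sa[u]\le\lambda_k(\Gamma)\|u\|^2$ on $V$, and min--max gives $\lambda_k(\widetilde\Gamma)\le\lambda_k(\Gamma)$ for every $k$, with no sign hypothesis needed.

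For part (ii), $N$ is odd, so $\widetilde\Gamma$ has an even number of edges and $\dom\widetilde\sa$ is the subspace of $\widetilde H^1(\widetilde\Gamma)$ cut out by $\sum_{j=1}^{N+1}(-1)^jF_j=0$; extension by zero generally violates this, which is the main obstacle. The remedy I propose is to extend $\psi^{(i)}$ by the \emph{constant} $c_i:=-\sum_{j=1}^N(-1)^j\Psi^{(i)}_j$ on $e_{N+1}$: this adds nothing to the Dirichlet integral, places $\widetilde\psi^{(i)}$ in $\dom\widetilde\sa$ (the constraint then reads $\sum_{j=1}^N(-1)^j\Psi^{(i)}_j+c_i=0$, true by the choice of $c_i$), and one checks that on the constraint the mixed vertex terms connecting the first $N$ edges with $e_{N+1}$ cancel, so that $\widetilde\sa[u]=\sa[u|_\Gamma]+\int_0^{l_{N+1}}|u_{N+1}'|^2\,\dd x$ for every $u\in\dom\widetilde\sa$. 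For $u=\sum_i a_i\widetilde\psi^{(i)}\in V:=\spann\{\widetilde\psi^{(i)}\}$ the last component is the constant $\sum_i a_ic_i$, so $\widetilde\sa[u]=\sum_i\lambda_i(\Gamma)|a_i|^2$ whereas $\|u\|_{L^2(\widetilde\Gamma)}^2=\sum_i|a_i|^2+l_{N+1}\,|\sum_i a_ic_i|^2\ge\sum_i|a_i|^2$. This extra positive mass in the denominator is precisely why the hypothesis $\lambda_k(\Gamma)\ge0$ is invoked: under it, $\sum_i\lambda_i(\Gamma)|a_i|^2\le\lambda_k(\Gamma)\sum_i|a_i|^2\le\lambda_k(\Gamma)\|u\|^2$, so again $\widetilde\sa[u]\le\lambda_k(\Gamma)\|u\|^2$ on $V$, and min--max yields the claim.

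The one computation that needs care is the cancellation of the mixed vertex terms in part (ii): writing that contribution as $-2\,\Imag\big(\overline{U_{N+1}}\sum_{k=1}^N(-1)^kU_k\big)$ and inserting $U_{N+1}=-\sum_{k=1}^N(-1)^kU_k$ from the vertex constraint makes the argument of $\Imag$ equal to $-\big|\sum_{k=1}^N(-1)^kU_k\big|^2$, which is real, so the term vanishes. Everything else is routine, the min--max principle being applicable because the forms involved are densely defined, symmetric, semi-bounded and closed, with purely discrete spectrum.
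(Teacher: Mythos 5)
Your proposal is correct and takes essentially the same route as the paper: extension by zero for even $N$, and for odd $N$ extension by the constant $\sum_{j=1}^N(-1)^{j+1}F_j$ on the new edge so that the parity constraint holds and the mixed vertex terms vanish as the imaginary part of $-\big|\sum_{k=1}^N(-1)^kF_k\big|^2$, followed by min--max, with the extra $L^2$ mass on the new edge being exactly where the hypothesis $\lambda_k(\Gamma)\ge 0$ enters. The only cosmetic difference is that you test on the span of eigenfunctions while the paper uses an arbitrary $k$-dimensional subspace on which the form is bounded by $\lambda_k(\Gamma)$ times the norm squared.
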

\begin{proof}
(i) Let us first consider the case where $\Gamma$ has an even number of $N \geq 2$ edges and $\widetilde \Gamma$ is obtained from $\Gamma$ by adding an additional edge $e_{N+1}$. Let us denote by $\sa$ and $\widetilde \sa$ the quadratic forms for $\Gamma$ and $\widetilde \Gamma$, respectively. For any $f \in \dom \sa$ we denote by $g$ its extension by zero to $\widetilde \Gamma$. Then
\begin{align*}
 \widetilde \sa [g] & = \sum_{j = 1}^{N+1} \int_0^{l_j} |g_j'|^2 {\,\dd x} - 2 \Imag \sum_{j = 2}^{N+1} \sum_{k = 1}^{j-1} (-1)^{j + k} \left( G_k \overline{G_j} \right) \\
 & = \sum_{j = 1}^{N} \int_0^{l_j} |f_j'|^2 {\,\dd x} - 2 \Imag \sum_{j = 2}^{N} \sum_{k = 1}^{j-1} (-1)^{j + k} \left( F_k \overline{F_j} \right) \\
 & = \sa [f].
\end{align*}
Now let $\cF$ be a $k$-dimensional subspace of $\dom \sa$ such that
\begin{align*}
 \sa [f] \leq \lambda_k (\Gamma) \|f\|_{L^2 (\Gamma)}^2, \quad f \in \cF.
\end{align*}
Then the space $\cG$ of extensions $g$ of $f \in \cF$ by zero has dimension $k$ and we have
\begin{align*}
 \widetilde \sa [g] = \sa [f] \leq \lambda_k (\Gamma) \|f\|_{L^2 (\Gamma)}^2 = \lambda_k (\Gamma) \|g\|_{L^2 (\widetilde \Gamma)}^2, \quad g \in \cG.
\end{align*}
By min-max principle, $\lambda_k (\widetilde \Gamma) \leq \lambda_k (\Gamma)$ follows thus for all $k \in \N$ in the case that $\Gamma$ has an even number of edges.

(ii) Consider next the case that $N \geq 3$ is odd, $\Gamma$ has edges $e_1, \dots, e_N$ and $\widetilde \Gamma$ has an additional edge $e_{N+1}$. Denote again by $\sa$ and $\widetilde \sa$ the quadratic forms for $\Gamma$ and $\widetilde \Gamma$, respectively. For any $f \in \dom \sa$ let $g$ be its extension to $\widetilde \Gamma$ by the function with constant value $g_{N+1} = \sum_{j = 1}^N (-1)^{j+1} F_j$ on the additional edge $e_N$. Then it is easy to check that we have $g \in \dom \widetilde a$, and
\begin{align*}
 \widetilde \sa [g] & = \sum_{j = 1}^{N+1} \int_0^{l_j} |g_j'|^2 {\,\dd x} - 2 \Imag \sum_{j = 2}^{N+1} \sum_{k = 1}^{j-1} (-1)^{j + k} \left(G_k \overline{G_j} \right) \\
 & = \sum_{j = 1}^N \int_0^{l_j} |f_j'|^2 {\,\dd x} - 2 \Imag \sum_{j = 2}^N \sum_{k = 1}^{j-1} (-1)^{j + k} \left(F_k \overline{F_j} \right) \\
 & \quad - 2 \Imag \sum_{k = 1}^N (-1)^{N + 1 + k} \bigg( F_k \sum_{j = 1}^N (-1)^{j+1} \overline{F_j} \bigg) \\
 & = \sa [f],
\end{align*}
because
\begin{align*}
 - \Imag \sum_{k = 1}^N (-1)^{N + 1 + k} \bigg( F_k \sum_{j = 1}^N (-1)^{j+1} \overline{F_j} \bigg) = \Imag |g_{N + 1}|^2 = 0.
\end{align*}
Let now $\cF$ be a $k$-dimensional subspace of $\dom \sa$ such that
\begin{align*}
 \sa [f] \leq \lambda_k (\Gamma) \|f\|_{L^2 (\Gamma)}^2, \quad f \in \cF.
\end{align*}
Then the space $\cG$ of extensions $g$ of $f \in \cF$ constructed as described above has the dimension $k$; note that elements of $\cF$ are linearly independent if and only if the same is true for their extension to $\cG$. We have
\begin{align*}
 \widetilde \sa [g] = \sa [f] \leq \lambda_k (\Gamma) \|f\|_{L^2 (\Gamma)}^2 \leq \lambda_k (\Gamma) \|g\|_{L^2 (\widetilde \Gamma)}^2, \quad g \in \cG,
\end{align*}
as long as $\lambda_k (\Gamma) \geq 0$, and the assertion for odd $N$ thus follows.
\end{proof}

By analogous reasoning we obtain the following claim.

\begin{proposition}\label{prop:evenEdges}
Assume that $\Gamma$ is a star graph with an even number of edges and that $\widetilde \Gamma$ is obtained from $\Gamma$ by attaching any even number of additional edges. Then
\begin{align*}
 \lambda_k (\widetilde \Gamma) \leq \lambda_k (\Gamma)
\end{align*}
holds for all $k \in \N$.
\end{proposition}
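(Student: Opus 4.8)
The plan is to mimic the proof of Proposition~\ref{prop:addEdge}(i), but to attach all the new edges \emph{simultaneously} rather than one at a time. This is essential: after adding a single edge to a star with an even number of edges one obtains a star with an odd number of edges, where Proposition~\ref{prop:addEdge} controls only the nonnegative eigenvalues, so a naive iteration of the single-edge result would lose the estimate on the negative part of the spectrum.

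Concretely, write $N$ for the (even) number of edges of $\Gamma$, suppose $\widetilde \Gamma$ is obtained by attaching $2m$ further edges $e_{N+1}, \dots, e_{N+2m}$ at the central vertex, and let $\sa$ and $\widetilde \sa$ denote the quadratic forms from Proposition~\ref{prop:form} on $\Gamma$ and $\widetilde \Gamma$. Given $f \in \dom \sa$, I would let $g$ be the extension of $f$ to $\widetilde \Gamma$ that vanishes identically on each new edge. Then $g_j \in H^1$ on every edge, and since $N + 2m$ is even the only vertex constraint to verify is $\sum_{j=1}^{N+2m} (-1)^j G_j = 0$; but $G_j = 0$ for $j > N$, so this reduces to $\sum_{j=1}^N (-1)^j F_j = 0$, which holds because $f \in \dom \sa$ and $N$ is even. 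Hence $g \in \dom \widetilde \sa$.

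Next one observes that this zero-extension changes neither the value of the form nor the $L^2$-norm: in $\widetilde \sa [g]$ every integral over a new edge vanishes, and in the double sum $-2 \Imag \sum_{j=2}^{N+2m} \sum_{k=1}^{j-1} (-1)^{j+k} G_k \overline{G_j}$ every summand carrying an index exceeding $N$ drops out since $G_j = 0$ there, so that $\widetilde \sa [g] = \sa [f]$ and $\|g\|_{L^2 (\widetilde \Gamma)} = \|f\|_{L^2 (\Gamma)}$. Finally, taking a $k$-dimensional subspace $\cF \subset \dom \sa$ on which $\sa [f] \leq \lambda_k (\Gamma) \|f\|_{L^2 (\Gamma)}^2$ (which exists by the min-max principle, e.g.\ the span of the first $k$ eigenfunctions), its image $\cG$ under the zero-extension is again $k$-dimensional, since the extension map is linear and injective, and $\widetilde \sa [g] = \sa [f] \leq \lambda_k (\Gamma) \|f\|_{L^2 (\Gamma)}^2 = \lambda_k (\Gamma) \|g\|_{L^2 (\widetilde \Gamma)}^2$ for all $g \in \cG$. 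The min-max principle then yields $\lambda_k (\widetilde \Gamma) \leq \lambda_k (\Gamma)$ for every $k \in \N$.

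There is essentially no further obstacle: once the edges are added simultaneously, the zero-extension works uniformly in $k$, because the only point at which the parity of the vertex degree enters — the vertex constraint defining $\dom \widetilde \sa$ — is automatically met by a function supported on the original edges when both $N$ and $N + 2m$ are even. The sole subtlety worth flagging in the write-up is precisely the one mentioned at the outset, namely that this simultaneous construction genuinely cannot be replaced by iterating Proposition~\ref{prop:addEdge}.
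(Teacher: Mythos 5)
Your proof is correct and is exactly the argument the paper intends when it says Proposition~\ref{prop:evenEdges} follows ``by analogous reasoning'' from Proposition~\ref{prop:addEdge}(i): extend by zero to the new edges, note that the anti-Kirchhoff constraint at the (still even-degree) central vertex is inherited because it reduces to the constraint on the original $N$ edges, observe that neither the form value nor the $L^2$-norm changes, and apply min-max. Your remark that the simultaneous attachment cannot be replaced by iterating the single-edge result (which would pass through an odd-degree star and lose control of the negative eigenvalues) is a correct and worthwhile observation about why the proposition is stated separately.
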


\begin{corollary}
Let $\Gamma$ be a star graph with an even number $N \geq 2$ of edges. Then
\begin{align*}
 \lambda_k (\Gamma) \leq \frac{(k-1)^2 \pi^2}{\diam (\Gamma)^2} \leq \frac{(k-1)^2 \pi^2 N^2}{4 L (\Gamma)^2} = \frac{(k-1)^2 \pi^2}{4 A (\Gamma)^2}
\end{align*}
holds for all $k \in \N$, where $L (\Gamma)$ denotes the total length of $\Gamma$, $A (\Gamma)$ the arithmetic mean of the edge lengths, and $\diam (\Gamma)$ the metric diameter of $\Gamma$.
\end{corollary}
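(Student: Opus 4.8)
The plan is to break the displayed chain into its three links and dispatch each one. The last equality, $(k-1)^2\pi^2 N^2 / (4 L(\Gamma)^2) = (k-1)^2\pi^2 / (4 A(\Gamma)^2)$, is nothing but the definition $A(\Gamma) = L(\Gamma)/N$. For the middle inequality I would first record that the metric diameter of a star graph with $N \geq 2$ edges of lengths $l_1, \dots, l_N$ equals $l_{(1)} + l_{(2)}$, the sum of the two largest edge lengths, since the distance between the loose ends of $e_i$ and $e_j$ is $l_i + l_j$ and the maximum over $i \neq j$ is attained for the two longest edges. Then I would verify the elementary bound $l_{(1)} + l_{(2)} \geq 2 A(\Gamma)$: each of the $N - 2$ remaining lengths satisfies $l_{(m)} \leq l_{(2)} \leq \tfrac12(l_{(1)} + l_{(2)})$, hence $2 \sum_{m \geq 3} l_{(m)} \leq (N-2)(l_{(1)} + l_{(2)})$, and adding $2(l_{(1)} + l_{(2)})$ to both sides gives $2 L(\Gamma) \leq N(l_{(1)} + l_{(2)})$, that is, $\diam(\Gamma) \geq 2 L(\Gamma)/N = 2 A(\Gamma)$. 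Squaring and multiplying by $(k-1)^2\pi^2 \geq 0$ yields the middle inequality.

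The first inequality, $\lambda_k(\Gamma) \leq (k-1)^2\pi^2 / \diam(\Gamma)^2$, is where the surgery machinery enters. I would relabel the edges so that $e_1$ and $e_2$ are the two longest, whence $l_1 + l_2 = \diam(\Gamma)$, and let $\Gamma'$ be the $2$-star consisting of $e_1$ and $e_2$ alone. Then $\Gamma$ is obtained from $\Gamma'$ by attaching the remaining $N - 2$ edges; since $N$ is even, $N - 2$ is even, so Proposition \ref{prop:evenEdges} applies and gives $\lambda_k(\Gamma) \leq \lambda_k(\Gamma')$ for every $k \in \N$. Finally, as noted earlier, for $N = 2$ the operator is unitarily equivalent to the Neumann Laplacian on an interval of length $l_1 + l_2$, whose $k$-th eigenvalue is $(k-1)^2\pi^2 / (l_1 + l_2)^2 = (k-1)^2\pi^2 / \diam(\Gamma)^2$. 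Chaining this bound with the middle inequality and the final identity completes the proof.

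I do not expect a genuine obstacle here: the nontrivial ingredient, Proposition \ref{prop:evenEdges}, is already available, and the rest is elementary. The one point demanding care is the parity bookkeeping: the reduction to a $2$-star re-attaches an even number, $N - 2$, of edges, and this is exactly where the hypothesis that $N$ is even is used---for odd $N$ no such reduction through Proposition \ref{prop:evenEdges} is permitted. It is also worth noting the degenerate value $k = 1$, for which all right-hand sides vanish and the statement reduces to $\lambda_1(\Gamma) \leq 0$; this is consistent, since for even $N$ one has $\lambda_1(\Gamma) \leq 0$, with equality precisely when $N = 2$.
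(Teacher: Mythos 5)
Your proposal is correct and follows essentially the same route as the paper: both reduce to the $2$-star formed by the two longest edges via Proposition~\ref{prop:evenEdges} (using that $N-2$ is even), identify its spectrum with that of the Neumann Laplacian on an interval of length $\diam(\Gamma)=l_1+l_2$, and finish with the same averaging bound $l_1+l_2\geq 2L(\Gamma)/N$. The only difference is that you spell out the elementary diameter and pigeonhole steps in slightly more detail than the paper does.
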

\noindent It should be remarked that this bound is trivial for $k = 1$ as it only yields $\lambda_1 (\Gamma) \leq 0$; recall that by Proposition~\ref{p:neg_exist} we know that $\lambda_1 (\Gamma) < 0$ holds, in particular, for $N\ge 4$, and this will be sharpened in Theorem \ref{thm:optimizationGeneral} below.
\begin{proof}
Without loss of generality we may assume that the edges are numbered in such a way that $l_1 \geq l_2 \geq \dots \geq l_N$. Then $l_1 + l_2 = \diam (\Gamma)$ and
\begin{align}\label{eq:pigeonhole}
 \frac{l_1 + l_2}{2} \geq \frac{L (\Gamma)}{N} = A (\Gamma).
\end{align}
Let $\cP$ be the path graph obtained from $\Gamma$ by removing all the edges except $e_1$ and $e_2$, which is certainly an even number. What we get is a 2-star and the vertex conditions \eqref{eq:conditions} at the vertex simplify to Kirchhoff conditions, which in the case of a 2-star is trivial, i.e. the spectrum of $\cP$ coincides with the spectrum of the Neumann Laplacian on an interval of length $l_1 + l_2$. Proposition~\ref{prop:evenEdges} then gives
\begin{align*}
 \lambda_k (\Gamma) \leq \lambda_k (\cP) = \frac{(k-1)^2 \pi^2}{L (\cP)^2} = \frac{(k-1)^2 \pi^2}{\diam (\Gamma)^2} = \frac{(k-1)^2 \pi^2}{(l_1 + l_2)^2} \leq \frac{(k-1)^2 \pi^2 N^2}{4 L (\Gamma)^2},
\end{align*}
where \eqref{eq:pigeonhole} was used; this we have set out to prove.
\end{proof}

We have seen in Example~\ref{ex:computation} that in some cases the ground state eigenvalue increases with edge length. In fact, we have the following more general result.
\begin{proposition} \label{p:extens}
Let $\Gamma$ be a star graph. Assume that $\widetilde \Gamma$ is obtained from $\Gamma$ by extending the length of an edge. Then, fixing a $k \in \N$, we get:
\begin{enumerate}
 \item If $\lambda_k (\widetilde \Gamma) \leq 0$, then $\lambda_k (\Gamma) \leq \lambda_k (\widetilde \Gamma)$.
 \item If $\lambda_k (\Gamma) \geq 0$, then $\lambda_k (\widetilde \Gamma) \leq \lambda_k (\Gamma)$.
\end{enumerate}
In particular, $\lambda_1 (\Gamma) \leq \lambda_1 (\widetilde \Gamma)$.
\end{proposition}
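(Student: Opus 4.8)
The plan is to handle the two inequalities separately, transporting in each case a suitable $k$-dimensional test space between $\Gamma$ and $\widetilde\Gamma$ and applying the min--max principle; the key point is that assertion~(i) will require running the construction of~(ii) in reverse. Throughout I would assume, without loss of generality, that $\widetilde\Gamma$ is obtained by lengthening the edge $e_N$ from $l_N$ to $\widetilde l_N>l_N$, all other edges unchanged, and write $\sa,\widetilde\sa$ for the forms from Proposition~\ref{prop:form}. The feature I would stress at the outset is that $\Gamma$ and $\widetilde\Gamma$ have the \emph{same} vertex degree $N$, so the vertex terms of the two forms are algebraically identical and, for even $N$, the form-domain constraint $\sum_j(-1)^jF_j=0$ depends only on the traces at the central vertex $v_0$. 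Hence any map between the two graphs that preserves these traces automatically respects the form domains and leaves the vertex terms untouched.

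For~(ii) I would take a $k$-dimensional $\cF\subseteq\dom\sa$ (e.g.\ a spectral subspace) with $\sa[f]\le\lambda_k(\Gamma)\|f\|_{L^2(\Gamma)}^2$, and extend each $f$ to $\widetilde\Gamma$ by keeping it unchanged on $e_1,\dots,e_{N-1}$ and on $[0,l_N]$, and making it \emph{constant}, equal to $f_N(l_N)$, on the added piece $[l_N,\widetilde l_N]$. This extension $g$ lies in $\dom\widetilde\sa$, the assignment $f\mapsto g$ is injective, and since the new piece contributes nothing to the Dirichlet integral one gets $\widetilde\sa[g]=\sa[f]$ while $\|g\|_{L^2(\widetilde\Gamma)}^2\ge\|f\|_{L^2(\Gamma)}^2$. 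Because $\lambda_k(\Gamma)\ge 0$, this yields $\widetilde\sa[g]\le\lambda_k(\Gamma)\|g\|_{L^2(\widetilde\Gamma)}^2$ on the extended space, and min--max gives $\lambda_k(\widetilde\Gamma)\le\lambda_k(\Gamma)$.

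For~(i), set $\mu:=\lambda_k(\widetilde\Gamma)\le 0$ and start instead from the spectral subspace $\widetilde\cF$ of the operator on $\widetilde\Gamma$ for eigenvalues $\le\mu$ (so $\dim\widetilde\cF\ge k$ and $\widetilde\sa[g]\le\mu\|g\|_{L^2(\widetilde\Gamma)}^2$). Now I would \emph{restrict} $g$ back to $\Gamma$, cutting $e_N$ down to length $l_N$; the resulting $\phi$ lies in $\dom\sa$, and the key computation is
\begin{align*}
\sa[\phi]=\widetilde\sa[g]-\int_{l_N}^{\widetilde l_N}|g_N'|^2\,\dd x,\qquad
\|\phi\|_{L^2(\Gamma)}^2=\|g\|_{L^2(\widetilde\Gamma)}^2-\int_{l_N}^{\widetilde l_N}|g_N|^2\,\dd x,
\end{align*}
since vertex terms and the edges $e_1,\dots,e_{N-1}$ are untouched. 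Combined with $\widetilde\sa[g]\le\mu\|g\|^2$, the desired bound $\sa[\phi]\le\mu\|\phi\|_{L^2(\Gamma)}^2$ reduces to $\mu\int_{l_N}^{\widetilde l_N}|g_N|^2\,\dd x\le\int_{l_N}^{\widetilde l_N}|g_N'|^2\,\dd x$, which is immediate from $\mu\le 0$. To keep the dimension count valid I would then check that $g\mapsto\phi$ is injective on $\widetilde\cF$: if $\phi\equiv 0$ then $g$ vanishes on $e_1,\dots,e_{N-1}$ and on $[0,l_N]\subseteq e_N$, and since each edge-component of an element of $\widetilde\cF$ is a finite combination of $\cosh$- and $\cos$-type functions, hence real-analytic, unique continuation forces $g\equiv 0$. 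Min--max then gives $\lambda_k(\Gamma)\le\mu$. The final claim $\lambda_1(\Gamma)\le\lambda_1(\widetilde\Gamma)$ is the case $k=1$ of~(i), since $\lambda_1(\widetilde\Gamma)\le 0$ always holds (for $N\ge 3$ by Proposition~\ref{p:neg_exist}, and for $N=2$ because the operator is the Neumann Laplacian on an interval).

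The only genuinely non-routine ingredient is the asymmetry itself: part~(ii) is the standard ``extend by a constant'' surgery, whereas part~(i) has to be done backwards --- restricting an eigen-subspace of the \emph{longer} graph and exploiting the sign of $\mu$ to swallow the nonnegative boundary defect $\int_{l_N}^{\widetilde l_N}(|g_N'|^2-\mu|g_N|^2)\,\dd x$. The small technical obstacle is then the injectivity of the restriction map, which is where real-analyticity of the edge components enters; I do not anticipate any difficulty beyond that.
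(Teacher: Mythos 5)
Your proof is correct and follows essentially the same route as the paper: part (ii) extends test functions by a constant on the added segment, and part (i) restricts a test space from $\widetilde\Gamma$ and uses the sign of $\lambda_k(\widetilde\Gamma)$ to absorb both the discarded Dirichlet integral and the loss of $L^2$ norm. The only (welcome) difference is that you justify the injectivity of the restriction map in (i) via real-analyticity of the edge components, a point the paper's proof dismisses as obvious even though it genuinely requires taking $\widetilde\cF$ to be a spectral subspace (or a similar argument), since an arbitrary $H^1$ test function could be supported on the added segment alone.
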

\begin{proof}
(i) Assume that $\lambda_k (\widetilde \Gamma) \leq 0$ and pick a subspace $\widetilde \cF$ of $\dom \widetilde \sa$ of dimension $k$ such that
\begin{align*}
 \widetilde \sa [\widetilde f] \leq \lambda_k (\widetilde \Gamma) \|\widetilde f\|_{L^2 (\widetilde \Gamma)}^2
\end{align*}
holds for all $\widetilde f \in \widetilde \cF$. Let $\cF \subset \dom \sa$ denote the space of restrictions to $\Gamma$ of functions in $\widetilde \cD$. The restriction $f$ of any non-zero $\widetilde f \in \widetilde \cF$ to $\Gamma$ is obviously non-zero again, hence $\dim \cF = \dim \widetilde \cF = k$. Moreover, we have
\begin{align*}
 \sa [f] \leq \sa [f] + \int_{\widetilde \Gamma \setminus \Gamma} |\widetilde f'|^2 {\,\dd x} = \widetilde \sa [\widetilde f] \leq \lambda_k (\widetilde \Gamma) \|\widetilde f\|_{L^2 (\widetilde \Gamma)}^2 \leq \lambda_k (\widetilde \Gamma) \|f\|_{L^2 (\Gamma)}^2,
\end{align*}
where the last estimate comes from $\|f\|_{L^2 (\Gamma)} \leq \|\widetilde f\|_{L^2 (\widetilde \Gamma)}$ and $\lambda_k (\widetilde \Gamma) \leq 0$. Hence using the Rayleigh--Ritz principle we infer that
\begin{align*}
 \lambda_k (\Gamma) \leq \max_{\substack{f \in \cF\\ f \neq 0}} \frac{\sa [f]}{\|f\|_{L^2 (\Gamma)}^2} \leq \lambda_k (\widetilde \Gamma).
\end{align*}

(ii) Assume next that $\lambda_k (\Gamma) \geq 0$ and take a subspace $\cF$ of $\dom \sa$ with $\dim \cF = k$ such that
\begin{align*}
 \sa [f] \leq \lambda_k (\Gamma) \|f\|_{L^2 (\Gamma)}^2
\end{align*}
holds for all $f \in \cF$. For each $f \in \cF$, denote by $\widetilde f$ the extension of $f$ to $\widetilde \Gamma$ by a constant on the added part, continuous at the gluing vertex, and let $\widetilde \cF$ be the $k$-dimensional subspace of $\dom \widetilde \sa$ consisting of all such extensions of all $f \in \cF$. Then
\begin{align*}
 \widetilde \sa [\widetilde f] & = \sa [f] \leq \lambda_k (\Gamma) \|f\|_{L^2 (\Gamma)}^2 \leq \lambda_k (\Gamma) \|\widetilde f\|_{L^2 (\widetilde \Gamma)}^2
\end{align*}
holds for all $\widetilde f \in \widetilde \cF$, implying the assertion by the same argument as above.
\end{proof}

\section{Remarks on star graphs with Dirichlet endpoints}
\label{s:Dirichlet}

So far we have considered star graphs with Neumann conditions at the `loose ends'. Star graphs with Dirichlet endpoints behave differently, and we will discuss this briefly in this section. To be precise, we consider, on a finite metric star graph $\Gamma$, the negative Laplacian with vertex conditions \eqref{eq:conditions} at the central vertex $v_0$ and Dirichlet conditions at the vertices $v_1, \dots, v_N$ of degree one. It is easy to see that this operator corresponds to the sesquilinear form
\begin{align*}
 \sa_{\rm D} [f, g] & = \sum_{j = 1}^N \int_0^{l_j} f_j' \overline{g_j'}{\,\dd x} + i \sum_{j = 2}^N \sum_{k = 1}^{j - 1} (-1)^{j + k} \left( F_k \overline{G_j} - F_j \overline{G_k} \right)
\end{align*}
with domain
\begin{align*}
 \dom \sa_{\rm D} = \begin{cases}
             \big\{ f \in \widetilde H^1 (\Gamma) : f (v_j) = 0, j = 1, \dots, N \big\}, & \text{if}~N~\text{ odd},\\
             \Big\{ f \in \widetilde H^1 (\Gamma) : \sum_{j = 1}^N (-1)^j F_j = 0, f (v_j) = 0, j = 1, \dots, N \Big\},& \text{if}~N~\text{even}.
            \end{cases}
\end{align*}

The following proposition should be compared to Proposition \ref{p:extens}.
\begin{proposition}
Consider the Laplacian with vertex conditions \eqref{eq:conditions} at the central vertex and Dirichlet boundary conditions at the vertices of degree one. If $\widetilde \Gamma$ is obtained from the star graph $\Gamma$ by extending the length of an edge, the eigenvalues satisfy the inequality
\begin{align*}
 \lambda_k (\widetilde \Gamma) \leq \lambda_k (\Gamma) \;\;\, \text {for all}\;\; k\in\dN.
\end{align*}
\end{proposition}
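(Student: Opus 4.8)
The plan is to run the extension-by-zero argument, which is available here precisely because of the Dirichlet condition at the loose ends (this is exactly the feature that fails in the Neumann setting of Proposition~\ref{p:extens}). Fix $k \in \dN$ and assume without loss of generality that $\widetilde\Gamma$ arises from $\Gamma$ by replacing the edge $e_1$, identified with $[0, l_1]$, by an edge of length $\widetilde l_1 = l_1 + t$ for some $t > 0$, keeping all other edges and keeping the zero endpoint at the central vertex $v_0$. Write $\sa_{\rm D}$ and $\widetilde\sa_{\rm D}$ for the associated forms.

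First I would choose, via the min-max principle, a $k$-dimensional subspace $\cF \subset \dom \sa_{\rm D}$ with $\sa_{\rm D}[f] \leq \lambda_k(\Gamma)\|f\|_{L^2(\Gamma)}^2$ for all $f \in \cF$. For $f \in \cF$ define $\widetilde f$ on $\widetilde\Gamma$ by $\widetilde f_1 = f_1$ on $[0, l_1]$, $\widetilde f_1 = 0$ on $[l_1, l_1 + t]$, and $\widetilde f_j = f_j$ for $j = 2, \dots, N$. The key point is that $f$ satisfies the Dirichlet condition $f_1(l_1) = 0$, so $\widetilde f_1$ is continuous at $x = l_1$ and hence $\widetilde f_1 \in H^1(0, l_1 + t)$; moreover $\widetilde f_1(l_1 + t) = 0$, so $\widetilde f$ meets the Dirichlet condition at the new loose end. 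Since the traces at $v_0$ are unchanged, $\widetilde F_j = F_j$ for all $j$, the constraint $\sum_j (-1)^j \widetilde F_j = 0$ (relevant when $N$ is even) is inherited from $f$, so $\widetilde f \in \dom \widetilde\sa_{\rm D}$. The map $f \mapsto \widetilde f$ is linear and injective (restriction to $\Gamma$ recovers $f$), so $\widetilde\cF := \{\widetilde f : f \in \cF\}$ is a $k$-dimensional subspace of $\dom \widetilde\sa_{\rm D}$.

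Next I would compare form values. The Dirichlet integral of $\widetilde f$ gains nothing from the added segment since $\widetilde f_1' \equiv 0$ there, and the vertex terms of $\widetilde\sa_{\rm D}[\widetilde f]$ depend only on the $\widetilde F_j = F_j$; hence $\widetilde\sa_{\rm D}[\widetilde f] = \sa_{\rm D}[f]$. Likewise $\|\widetilde f\|_{L^2(\widetilde\Gamma)}^2 = \|f\|_{L^2(\Gamma)}^2$, the added segment contributing zero. Therefore $\widetilde\sa_{\rm D}[\widetilde f] \leq \lambda_k(\Gamma)\|\widetilde f\|_{L^2(\widetilde\Gamma)}^2$ for every $\widetilde f \in \widetilde\cF$, and the min-max principle gives $\lambda_k(\widetilde\Gamma) \leq \lambda_k(\Gamma)$.

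I do not anticipate a genuine obstacle; the only point worth emphasizing is why no sign hypothesis on $\lambda_k$ is needed, in contrast with Proposition~\ref{p:extens}. In the Neumann case one cannot extend by zero without leaving the form domain, since the function need not vanish at the old loose end, and extending by a nonzero constant inflates the $L^2$ norm, which forces the restriction $\lambda_k \geq 0$. Here the Dirichlet condition makes extension by zero admissible, and neither the form value nor the norm changes, so the monotonicity holds for all $k$ unconditionally.
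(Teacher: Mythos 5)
Your argument is correct and is essentially the paper's own proof: extend each trial function by zero on the lengthened edge (admissible precisely because of the Dirichlet condition at the loose end), observe that neither the form value nor the $L^2$ norm changes, and conclude by the min--max principle. Your closing remark on why no sign hypothesis on $\lambda_k$ is needed, in contrast to the Neumann case, is a accurate and worthwhile observation, though the paper leaves it implicit.
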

\begin{proof}
For any $f \in \dom \sa_{\rm D}$, the extension of $f$ by zero to $\widetilde \Gamma$, $\widetilde f$, belongs to $\dom \widetilde \sa_{\rm D}$  (the form on $\widetilde \Gamma$), and
\begin{align*}
 \widetilde \sa_{\rm D} [\widetilde f] = \sa_{\rm D} [f], \quad \|\widetilde f\|_{L^2 (\widetilde \Gamma)} = \|f\|_{L^2 (\Gamma)}.
\end{align*}
Take now $\cF \subset \dom \sa$ a $k$-dimensional subspace for which
\begin{align*}
 \sa_{\rm D} [f] \leq \lambda_k (\Gamma) \|f\|_{L^2 (\Gamma)}^2, \quad f \in \cF.
\end{align*}
Furthermore, let $\widetilde \cF$ be the $k$-dimensional subspace of $\dom \widetilde \sa$ obtained from zero-extensions of functions in $\cF$. Then we have
\begin{align*}
 \widetilde \sa_{\rm D} [\widetilde f] = \sa_{\rm D} [f] \leq \lambda_k (\Gamma)\|f\|_{L^2 (\Gamma)}^2 = \lambda_k (\Gamma) \|\widetilde f\|_{L^2 (\widetilde \Gamma)}^2, \quad \widetilde f \in \widetilde \cF,
\end{align*}
which implies the claim in the same way as in Proposition~\ref{p:extens}.
\end{proof}

With Proposition~\ref{p:neg_exist} in mind, it is natural to ask whether the star with Dirichlet endpoints has negative eigenvalues. As we will see in the following example, this is not always the case.
\begin{example} \label{ex: dstar}
In the Dirichlet case, the ansatz
\begin{align*}
 \psi_j (x) = \alpha_j \sinh (\kappa (l_j - x))
\end{align*}
can be used. Then
\begin{align*}
 \psi_j' (x) = - \kappa \alpha_j \cosh(\kappa (l_j - x))
\end{align*}
and the vertex conditions at the central vertex give
\begin{align*}
 \alpha_{j + 1} \sinh (\kappa l_{j+1}) - \alpha_j \sinh (\kappa l_j) - i \kappa \big( \alpha_j \cosh (\kappa l_j) + \alpha_{j+1} \cosh (\kappa l_{j+1}) \big) = 0
\end{align*}
for $j = 1, \dots, N$. For the equilateral case $l_1 = \dots = l_N = l$, let $A = \sinh (\kappa l) + i \kappa \cosh (\kappa l)$ and $B = - \sinh (\kappa l) + i \kappa \cosh (kl)$. Then the coefficients $\alpha_j$ satisfy
\begin{align*}
 \begin{pmatrix} A & B & 0 & \hdots & \hdots & 0 \\ 0 & \ddots & \ddots & \ddots &  & \vdots \\ \vdots & \ddots & \ddots & \ddots & \ddots & \vdots \\ \vdots & & \ddots & \ddots & \ddots & 0 \\ 0 & \hdots & \hdots & 0 & A & B \\ B & 0 & \hdots & \hdots & 0 & A \end{pmatrix} \begin{pmatrix} \alpha_1 \\ \vdots \\ \alpha_N \end{pmatrix} = 0.
\end{align*}
By computing the determinant $A^N + (-1)^{N + 1} B^N$  of the involved matrix, we obtain the secular equation
\begin{align*}
 \left( \sinh (\kappa l) + i \kappa \cosh (\kappa l) \right)^N + (-1)^{N+1} \left(-  \sinh (\kappa l) + i \kappa \cosh (kl) \right)^N = 0.
\end{align*}
For $N = 3$ we get
\begin{align*}
 \kappa \left(3 \sinh^2 (\kappa l) \cosh (\kappa l) - \kappa^2 \cosh^3 (\kappa l) \right) = 0
\end{align*}
or equivalently
\begin{align*}
 3 \tanh^2 (\kappa l) - \kappa^2 = 0.
\end{align*}
For $\kappa > 0$ this is satisfied if and only if $\sqrt{3} \tanh (\kappa l) = \kappa$. Since the left-hand side of this equation has derivative $\sqrt{3} l$ (w.r.t.\ $\kappa$) at $\kappa = 0$ and is concave, a solution exists if and only if $l >3^{-1/2}$, and it is unique in this case. That is, a unique negative eigenvalue exists if and only if $l>3^{-1/2}$. For $l = 1$, e.g., we get $\kappa \approx 1.60$, and the larger $l$ the smaller is the first eigenvalue in accordance with the previous proposition; in the limit $l \to \infty$ it converges from above to the eigenvalue $-3$ of the infinite 3-star graph corresponding to $\kappa=\sqrt{3}$. If the edges are short, on the other hand, meaning $l<3^{-1/2}$, the spectrum is positive; replacing $\kappa$ by $-ik$ with $k\in\dR$, we get the secular equation
\begin{align*}
 3 \tan^2 (kl) - k^2 = 0.
\end{align*}
Furthermore, for $N = 3$ not necessarily equilateral, we get the secular equation
\begin{align*}
 \tanh (\kappa l_1) \tanh (\kappa l_2) + \tanh (\kappa l_1) \tanh (\kappa l_3) +  \tanh (\kappa l_2) \tanh (\kappa l_3) = \kappa^2,
\end{align*}
and its counterpart for positive eigenvalues.
\end{example}

\section{Eigenvalue bounds for general compact metric graphs}
\label{sec:generalGraphs}

Now we turn to more general (not necessarily star) graphs. In this section, $\Gamma$ is a compact, finite metric graph with edge set $\cE$ of cardinality $E$ and vertex set $\cV$. We interpret $\Gamma$ as a finite collection of intervals $[0, l (e)]$, $e \in \cE$, meeting at certain of their endpoints identified with graph vertices. This view point will be natural, for instance, in the context of the graph surgery operations mentioned already in Section~\ref{s: surgery}; especially we will look into an operation where vertices of a given graph will be ``glued together'' to form a new graph. We stress that the vertex conditions and the operators we will consider are independent of the choice of the edges' orientations.

For every edge $e \mathrel{\widehat{=}} [0, l (e)]$ we denote by $o (e)$ the vertex associated with its zero endpoint (``the vertex from which $e$ originates'') and by $t (e)$ the vertex associated with the endpoint $l (e)$ (``the vertex at which $e$ terminates''); note that $o (e) = t (e)$ if $e$ is a loop. For every vertex $v \in \cV$ we write
\begin{align*}
 \cE_{v, o} := \left\{e \in \cE : o (e) = v \right\} \quad \text{and} \quad \cE_{v, t} := \left\{e \in \cE : t (e) = v \right\}.
\end{align*}
We will again say that a function $f : \Gamma \to \C$ belongs to $\widetilde H^k (\Gamma)$ for some $k \geq 0$ if its restriction to any edge $e \in \cE$ belongs to the Sobolev space $H^k (0, l (e))$. Let $f \in \widetilde H^1 (\Gamma)$. In order to define vertex conditions, for every vertex $v \in \cV$ we fix an enumeration of the endpoints meeting at $v$ and define, according to this enumeration, a vector $F (v) := (F_1 (v), \dots, F_{\deg (v)} (v))^\top$ of boundary values of $f$ at the vertex $v$. Note that while sometimes there may be a ``natural'' enumeration as in the case of a planar graph, in general we do not assume $\Gamma$ to be embedded in a Euclidean space and the enumeration choice is arbitrary. Analogously, if $f \in \widetilde H^2 (\Gamma)$, then we can define a vector $F' (v) = (F_1' (v), \dots, F_{\deg (v)}' (v))^\top$ of derivatives of $f$ at the endpoints corresponding to $v$ (conventionally, each edge is identified with an interval and the derivative is taken at its left endpoint).

At vertices of degree $\deg (v) \geq 2$ we again consider the matching conditions
\begin{align}\label{eq:conditionsGeneral}
 \left( F_{j + 1} (v) - F_j (v) \right) + i \left( F_j' (v) + F_{j+1}' (v) \right) = 0, \quad j = 1, \dots, \deg (v).
\end{align}
Furthermore, at any vertex of degree one we impose Neumann conditions, $f' (v) = 0$. We point out once more that the conditions \eqref{eq:conditionsGeneral} in the case of a vertex of degree one formally reduce to Neumann conditions. We will therefore in this section take the view point of having conditions \eqref{eq:conditionsGeneral} at all vertices, which will simplify some considerations.

\begin{example}
Let us derive a secular equation for the negative eigenvalues of the operator under consideration. To do so, we define for a given $\lambda < 0$ a Dirichlet-to-Neumann matrix $M (\lambda) \in \C^{2 E \times 2 E}$ via
\begin{align}\label{eq:M}
 M (\lambda) \begin{pmatrix} F (v_1) \\ \vdots \\ F (v_n) \end{pmatrix} = \begin{pmatrix} F' (v_1) \\ \vdots \\ F' (v_n) \end{pmatrix},
\end{align}
where $v_1, \dots, v_n$ is an enumeration of the vertices of $\Gamma$ and $f \in \widetilde H^2 (\Gamma)$ satisfies $- f'' = \lambda f$ on every edge; the fact that $\lambda$ is negative, and consequently, not a Dirichlet Laplacian eigenvalue for the disconnected edges, implies that $M (\lambda)$ is well-defined. The matrix $M (\lambda)$ can be computed as follows: the chosen enumeration of endpoints associated with each of the vertices $v_1, v_2, \dots, v_n$ gives rise to an enumeration of all the edges' endpoints. According to this enumeration, for $i < j$ the $2 \times 2$ submatrix
\begin{align*}
 \begin{pmatrix} m_{ii} (\lambda) & m_{ij} (\lambda) \\ m_{ji} (\lambda) & m_{jj} (\lambda) \end{pmatrix}
\end{align*}
is either zero, if the $i$-th and $j$-th endpoint belong to different edges, or equals the $2 \times 2$ Dirichlet-to-Neumann matrix for the interval $(0, l (e))$, if the $i$-th and $j$-th endpoint belong to the same edge $e$.\footnote{The Dirichlet-to-Neumann matrix for an interval $I = (0, l)$ can be computed explicitly and is given by $M_I (\lambda) = \kappa \begin{pmatrix} - \coth (\kappa l) & \frac{1}{\sinh (\kappa l)} \\ \frac{1}{\sinh (\kappa l)} & - \coth (\kappa l)  \end{pmatrix}$ where $\lambda = - \kappa^2$.} In particular, each row and column of $M (\lambda)$ has precisely two non-zero entries.

For $m = 1, \dots, n$ we now define the matrices
\begin{align*}
 A_m = \begin{pmatrix} 1 & -1 & 0 & \cdots & \cdots & 0 \\
 0 & \ddots & \ddots & \ddots &  & \vdots \\
 \vdots & \ddots & \ddots & \ddots & \ddots & \vdots \\
 \vdots &  & \ddots & \ddots & \ddots & 0 \\
 0 & \cdots & \cdots & 0 & \ddots & -1 \\
 -1 & 0 & \cdots & \cdots & 0 & 1
 \end{pmatrix}
 \in \C^{\deg (v_m) \times \deg (v_m)}
\end{align*}
in case $\deg (v_m) \geq 2$, or $A_m = (0)$ in case $\deg (v_m) = 1$, as well as
\begin{align*}
 B_m = \begin{pmatrix} 1 & 1 & 0 & \cdots & \cdots & 0 \\
 0 & \ddots & \ddots & \ddots &  & \vdots \\
 \vdots & \ddots & \ddots & \ddots & \ddots & \vdots \\
 \vdots &  & \ddots & \ddots & \ddots & 0 \\
 0 & \cdots & \cdots & 0 & \ddots & 1 \\
 1 & 0 & \cdots & \cdots & 0 & 1
 \end{pmatrix}
 \in \C^{\deg (v_m) \times \deg (v_m)}.
\end{align*}
if $\deg (v_m) \geq 2$ and $B_m = (1)$ in case $\deg (v_m) = 1$. Moreover, let
\begin{align*}
 \cA := \diag (A_1, \dots, A_n) \quad \text{and} \quad \cB := \diag (B_1, \dots, B_n)
\end{align*}
be the block diagonal matrices composed of the $A_m$ and $B_m$, respectively. Note that the vertex conditions at $v_m$ can be written equivalently as
\begin{align*}
 A_m F (v_m) + i B_m F' (v_m) = 0.
\end{align*}
This means that a function $f \in \widetilde H^2 (\Gamma)$ satisfies the vertex conditions under consideration at each vertex if and only if
\begin{align*}
 \cA \begin{pmatrix} F (v_1) \\ \vdots \\ F (v_n) \end{pmatrix} + i \cB \begin{pmatrix} F' (v_1) \\ \vdots \\ F' (v_n) \end{pmatrix} = 0
\end{align*}
holds. Note that due to the block structure of the matrices $\cA$ and $\cB$ the chosen order of vertices is irrelevant.

We can now easily derive the sought secular equation. Indeed, for a given $\lambda < 0$, a function $f \in \widetilde H^2 (\Gamma)$ solving $- f'' = \lambda f$ on each edge satisfies the indicated vertex conditions in view of \eqref{eq:M} if and only if
\begin{align*}
 \left( \cA + i \cB M (\lambda) \right) \begin{pmatrix} F (v_1) \\ \vdots \\ F (v_n) \end{pmatrix} = 0.
\end{align*}
Having in mind that $\lambda$ is not a Dirichlet Laplacian eigenvalue of any edge, it follows that $\lambda < 0$ is an eigenvalue of the operator under consideration if and only if
\begin{align}\label{eq:secular}
 \det (\cA + i \cB M (\lambda)) = 0.
\end{align}
\end{example}
\begin{remark}\label{rem:order}
The vertex conditions \eqref{eq:conditionsGeneral} depend on a fixed choice of an order of the endpoints of edges incident in $v$. However, the spectrum of the corresponding operator is independent of it. To explain it, let $\lambda$ be a negative eigenvalue satisfying the secular equation \eqref{eq:secular}. For a given vertex $v_m$, interchanging the chosen order of two incident edges results in interchanging the corresponding columns in the matrices $A_m$ and $B_m$, as well as interchanging both the corresponding rows and columns in $M (\lambda)$, cf.\ the expression of $M (\lambda)$ in the previous example. As a result, interchanging the position of two endpoints corresponding to the same vertex leads to interchange of the corresponding columns in the matrix $\cA + i \cB M (\lambda)$. Since this, up to a sign, does not change its determinant, it follows from \eqref{eq:secular} that the negative spectrum is invariant under a change of enumeration of endpoints within a vertex. The same is true for positive eigenvalues; as in Example~\ref{ex: dstar} we can replace $\kappa$ by $-ik$ with $k\in\dR$. Of course, in contrast to negative energies there are now points where $M (\lambda)$ is not well defined, referring Dirichlet eigenvalues on edges. However, they constitute a discrete set to which one can extend the claim using a continuity argument; recall that the eigenvalues of the graph Laplacian with a fixed vertex coupling are analytic with respect to the edge lengths \cite[Sec.~2.5.1]{BK13}.
\end{remark}
 The following example will be relevant, as an object of comparison, in the proof of Theorem \ref{thm:optimizationGeneral} below.

\begin{example}\label{ex:figure8}
Let us consider a figure-8 graph, i.e.\ a graph with one vertex and two loops meeting at it. Call the two edge lengths $l_1$ and $l_2$. In order to compute a potential negative eigenvalue, we use the ansatz
\begin{align*}
 \psi_j (x) = \alpha_j \cosh (\kappa x) + \beta_j \sinh (\kappa x), \quad x \in [0, l_j],
\end{align*}
so that
\begin{align*}
 \psi_j' (x) = \kappa \left( \alpha_j \sinh (\kappa x) + \beta_j \cosh (\kappa x) \right), \quad x \in [0, l_j],
\end{align*}
holds for $j = 1, 2$. A straightforward computation using the conditions \eqref{eq:conditionsGeneral} (where the chosen order starts with the two endpoints of $e_1$ followed by those of $e_2$) leads to the equations
\begin{align*}
 \begin{pmatrix}
  A_1 & B_1 & 0 & 0 \\
  A_1' & B_1' & 1 & i \kappa \\
  0 & 0 & A_2 & B_2  \\
  1 & i \kappa & A_2' & B_2'
 \end{pmatrix}
 \begin{pmatrix}
  \alpha_1 \\ \beta_1 \\ \alpha_2 \\ \beta_2
 \end{pmatrix}
 = 0,
\end{align*}
with the matrix entries
\begin{align*}
 A_j & = \cosh(\kappa l_j) - 1 - i \kappa \sinh(\kappa l_j), \\
 B_j & = \sinh (\kappa l_j) + i \kappa (1 - \cosh (\kappa l_j)), \\
 A_j' & = - \cosh (\kappa l_j) - i \kappa \sinh (\kappa l_j), \\
 B_j' & = - \sinh (\kappa l_j) - i \kappa \cosh (\kappa l_j)
\end{align*}
for $j = 1, 2$. Taking the determinant of the system's matrix, we obtain that $- \kappa^2$ is an eigenvalue if and only if
\begin{align*}
 -16 i \kappa (\kappa^2 - 1) \sinh \frac{\kappa l_1}{2} \sinh \frac{\kappa l_2}{2} \sinh \frac{\kappa (l_1 + l_2)}{2} = 0.
\end{align*}
Consequently, $\lambda_1 (\Gamma) = - 1$ is the only negative eigenvalue, independently of the choice of the edge lengths $l_1$ and $l_2$ and the chosen enumeration of the edges' endpoints at the vertex; note that it coincides with the negative eigenvalue of an infinite 4-star \cite{ET18}.

Next we observe that the figure-8 graph has an eigenvalue zero with multiplicity one, the eigenspace containing all constant functions.

For the positive eigenvalues $\lambda = k^2$ with $k > 0$, we make an ansatz
\begin{align*}
 \psi_j (x) = \alpha_j \cos (k x) + \beta_j \sin (k x).
\end{align*}
A computation similar to the one above shows that the vertex conditions \eqref{eq:conditionsGeneral} lead to the system of equations
\begin{align*}
 \begin{pmatrix}
  A_1 & B_1 & 0 & 0 \\
  A_1' & B_1' & 1 & i k \\
  0 & 0 & A_2 & B_2 \\
  1 & i k & A_2' & B_2'
 \end{pmatrix}
 \begin{pmatrix}
  \alpha_1 \\ \beta_1 \\ \alpha_2 \\ \beta_2
 \end{pmatrix}
 = 0,
\end{align*}
where
\begin{align*}
 A_j & = \cos (k l_j) - 1 + i k \sin (k l_j), \\
 B_j & = \sin (k l_j) + i k \left( 1 - \cos (k l_j) \right), \\
 A_j' & = - \cos (k l_j) + i k \sin (k l_j), \\
 B_j' & = - \sin (k l_j) - i k \cos (k l_j).
\end{align*}
Taking the determinant we find that $\lambda = k^2 > 0$ is an eigenvalue if and only if
\begin{align*}
 - 4 i k (1 + k^2) \big( \sin (k l_1) + \sin (k l_2) - \sin (k (l_1 + l_2)) \big) = 0,
\end{align*}
that is, essentially,
\begin{align*}
 \sin (k l_1) + \sin (k l_2) - \sin (k (l_1 + l_2)) = 0.
\end{align*}
From this we see immediately that the positive eigenvalues enjoy a scaling property: multiplying the lengths of both edges by a factor $t > 0$ leads to a scaling of the positive eigenvalues by the factor $t^{-2}$.

We can use this to compute some of the positive eigenvalues explicitly. Due to the scaling property we may assume without loss of generality that $l_1 + l_2 = 1$. In this case the secular equation is
\begin{align}\label{eq:secularSpecial}
 \sin (k) = \sin (k l_1) + \sin (k (1 - l_1)),
\end{align}
and this is clearly satisfied if $k = 2 n \pi$ for some $n \in \N$, since in this case both sides vanish due to the periodicity and anti-symmetry of $\sin$. There may exist further eigenvalues in general, which can easily be checked for specific choices of $l_1$. However, in the equilateral case $l_1 = l_2 = \frac{1}{2}$ the equation \eqref{eq:secularSpecial} simplifies $\sin (k) = 2 \sin (\frac{k}{2})$, and the positive eigenvalues are given by the numbers $(2 n \pi)^2$, $n \in \N$, with multiplicity one if $n$ is odd and multiplicity three if $n$ is even.
\end{example}

Next we extend Proposition~\ref{prop:form} and express the sesquilinear form associated with the operator under consideration. We omit its proof which is analogous to that of Proposition \ref{prop:form}.
\begin{proposition}\label{prop:formGeneral}
Let $v_1, \dots, v_n$ be an enumeration of $\cV$. Then the sesquilinear form associated with the Laplacian on $\Gamma$ with vertex conditions \eqref{eq:conditionsGeneral} at each vertex of degree larger or equal two and Neumann boundary conditions at vertices of degree one is given by
\begin{align*}
 \sa [f, g] & = \sum_{j = 1}^{|\cE|} \int_0^{l_j} f_j' \overline{g_j'} {\,\dd x} \\
 & \quad + i \sum_{m = 1}^n \sum_{j = 2}^{\deg (v_m)} \sum_{k = 1}^{j - 1} (-1)^{j + k} \left( F_k (v_m) \overline{G_j (v_m)} - F_j (v_m) \overline{G_k (v_m)} \right).
\end{align*}
Its domain consists of all $f \in \widetilde H^1 (\Gamma)$ such that
\begin{align*}
 \sum_{j = 1}^{\deg (v_m)} (-1)^j F_j (v_m) = 0
\end{align*}
holds for all $m$ such that $\deg (v_m)$ is even.
\end{proposition}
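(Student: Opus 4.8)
The plan is to imitate the proof of Proposition~\ref{prop:form} line by line, carrying out every step vertex by vertex and using the observation that both the quadratic form in the statement and the boundary terms produced by integration by parts split as sums over the vertices of $\Gamma$, with the term attached to $v_m$ depending only on the boundary data $F(v_m)$.

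First I would check that $\sa$ is densely defined, symmetric, semibounded below and closed. Density is immediate, since all smooth functions vanishing at every vertex lie in $\dom\sa$ (their boundary values are zero, so the even-degree constraints hold trivially) and are dense in $L^2(\Gamma)$. Symmetry follows as before from $F_k(v_m)\overline{F_j(v_m)} - F_j(v_m)\overline{F_k(v_m)} = 2i\,\Imag\big(F_k(v_m)\overline{F_j(v_m)}\big)$, so $\sa[f]$ is real. For semiboundedness, pick $l>0$ with $l\le l(e)$ for every $e\in\cE$, apply the trace estimate $|f_e(0)|^2,\,|f_e(l(e))|^2 \le \frac{2}{l}\|f_e\|^2 + l\|f_e'\|^2$ from \cite[Lemma~1.3.8]{BK13} at each edge endpoint, and sum over all edges and all vertex terms; since the number of vertex terms is bounded in terms of $E=|\cE|$, one obtains $\sa[f] \ge (1-c_E l)\sum_e\|f_e'\|^2 - \frac{c_E'}{l}\|f\|_{L^2(\Gamma)}^2$ with constants depending only on $E$, so a suitable choice of $l$ gives semiboundedness and also shows that the norm induced by $\sa$ is equivalent to the $\widetilde H^1(\Gamma)$-norm. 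Closedness then reduces to the fact that $\dom\sa$ is a closed subspace of $\widetilde H^1(\Gamma)$, being the intersection of the kernels of the finitely many bounded functionals $f\mapsto\sum_j(-1)^jF_j(v_m)$ over the even-degree vertices $v_m$.

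Next I would identify the associated self-adjoint operator $A$ (which exists by the representation theorem for closed semibounded forms). For $f\in\dom A$ and $g\in\dom\sa$, integrating by parts on each edge and testing with $g$ supported in edge interiors shows $A=-d^2/dx^2$ edgewise; the remaining boundary terms regroup, endpoint by endpoint, into a sum over the vertices, so $(Af,g)=\sa[f,g]$ becomes $\sum_m\big(\sum_j F_j'(v_m)\overline{G_j(v_m)} - i\sum_{j\ge2}\sum_{k<j}(-1)^{j+k}(F_k(v_m)\overline{G_j(v_m)} - F_j(v_m)\overline{G_k(v_m)})\big)=0$. As $g$ ranges over $\dom\sa$, the data $G(v_m)$ range independently over $\C^{\deg(v_m)}$ at odd-degree vertices and over the hyperplane $\{\,c:\sum_j(-1)^jc_j=0\,\}$ at even-degree vertices, so each summand vanishes separately; for a fixed vertex this is exactly the computation in the proof of Proposition~\ref{prop:form} (choosing $G(v_m)$ supported on $\{j_0,j_0+1\}$ recovers \eqref{eq:conditionsGeneral}), and at a degree-one vertex it yields $f'(v)=0$. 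Conversely, if $f\in\widetilde H^2(\Gamma)$ satisfies \eqref{eq:conditionsGeneral} at all vertices of degree $\ge2$ and Neumann at the vertices of degree one, then for $g\in\dom\sa$ the same vertex-local manipulations — using $\sum_j F_j'(v_m)=0$ when $\deg(v_m)$ is odd, and the relation $\sum_j(-1)^jF_j(v_m)=0$ together with the successive-substitution identity for $F_j'(v_m)$ from the proof of Proposition~\ref{prop:form} when $\deg(v_m)$ is even — give $\sa[f,g]=-\sum_e\int f_e''\overline{g_e}=(-f'',g)_{L^2(\Gamma)}$, whence $f\in\dom A$.

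The one point that goes beyond a verbatim repetition of Proposition~\ref{prop:form} is the presence of loops and parallel edges: a loop at $v$ contributes two endpoints to the enumeration at $v$, with the derivative taken into the edge at each. But the trace estimate is applied at individual endpoints and the algebra at a vertex depends only on $\deg(v_m)$ and the local boundary data, so neither step is affected, and the block-diagonal structure recorded in Remark~\ref{rem:order} makes everything independent of the enumeration choices. I expect the only mildly delicate part to be the bookkeeping in regrouping the edge-endpoint boundary terms into per-vertex sums, which is purely notational.
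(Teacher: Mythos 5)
Your proposal is correct and follows exactly the route the paper intends: the paper omits the proof of Proposition~\ref{prop:formGeneral}, stating only that it is analogous to that of Proposition~\ref{prop:form}, and your vertex-by-vertex execution of that analogy (density, symmetry, the trace-estimate semiboundedness with constants depending on $|\cE|$, closedness of the even-degree constraints, and the two-directional identification of the operator domain) is precisely that argument. Your explicit remark that loops and parallel edges are harmless because both the trace estimate and the vertex algebra act on individual endpoints and on the local boundary-data vector is the one genuinely new bookkeeping point, and you handle it correctly.
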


Our aim is now to provide upper bounds for the eigenvalues $\lambda_1 (\Gamma) \leq \lambda_2 (\Gamma) \leq \dots$ of the negative Laplacian on any finite, compact metric graph $\Gamma$ with vertex conditions \eqref{eq:conditionsGeneral}. In order to do so, we establish some surgery principles.
\begin{definition}\label{def:join}
Let $\Gamma$ be a finite, compact metric graph with vertex set $\cV (\Gamma)$ and let $v_1, v_2 \in \cV (\Gamma)$. We say that {\it $\widetilde \Gamma$ is obtained from $\Gamma$ by merging $v_1$ and $v_2$} or, equivalently, that {\it $\Gamma$ can be obtained from $\widetilde \Gamma$ by splitting a vertex $\hat v$} if the following hold: $\widetilde \Gamma$ has the same set of edges as $\Gamma$, $\cE (\widetilde \Gamma) = \cE (\Gamma)$; its vertex set equals
\begin{align*}
 \cV (\widetilde \Gamma) = (\cV (\Gamma) \setminus \{v_1, v_2\}) \cup \{\hat v\},
\end{align*}
where $\hat v$ is an additionally introduced vertex, and the relations between edges and vertices in $\widetilde \Gamma$ are given by
\begin{align*}
 \cE_{v, o} (\widetilde \Gamma) = \cE_{v, o} (\Gamma) \quad \text{and} \quad \cE_{v, t} (\widetilde \Gamma) = \cE_{v, t} (\Gamma), \quad v \neq \hat v,
\end{align*}
as well as
\begin{align*}
 \cE_{\hat v, o} (\widetilde \Gamma) = \cE_{v_1, o} (\Gamma) \cup \cE_{v_2, o} (\Gamma) \quad \text{and} \quad \cE_{\hat v, t} (\widetilde \Gamma) = \cE_{v_1, t} (\Gamma) \cup \cE_{v_2, t} (\Gamma).
\end{align*}
\end{definition}

Using this definition, we prove the following surgery principles on the effect of merging vertices on the spectrum:
\begin{proposition}\label{prop:join}
Let $\widetilde \Gamma$ be obtained from $\Gamma$ by merging two vertices $v_1, v_2 \in \cV$. Then the following claims hold:
\begin{enumerate}
 \item If $\deg (v_1) + \deg (v_2)$ is odd, i.e.\ $v_1$ and $v_2$ have opposite parity, then
 \begin{align}\label{eq:joining}
  \lambda_k (\widetilde \Gamma) \leq \lambda_k (\Gamma), \quad k \in \N,
 \end{align}
 holds.
 \item If both $\deg (v_1)$ and $\deg (v_2)$ are even, then \eqref{eq:joining} holds again.
 \item If both $\deg (v_1)$ and $\deg (v_2)$ are odd, then we have
 \begin{align}\label{eq:joining2}
  \lambda_k (\Gamma) \leq \lambda_k (\widetilde \Gamma), \quad k \in \N.
 \end{align}
\end{enumerate}
\end{proposition}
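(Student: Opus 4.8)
The plan is to run everything through the quadratic form of Proposition~\ref{prop:formGeneral} together with the min-max principle, by comparing the form $\widetilde\sa$ of $\widetilde\Gamma$ with the form $\sa$ of $\Gamma$ on a common space of functions. Since $\cE(\widetilde\Gamma)=\cE(\Gamma)$, both forms act on $\widetilde H^1(\Gamma)$, the Dirichlet parts $\sum_e\int|f_e'|^2$ coincide, and the vertex terms at every vertex other than $v_1,v_2$ (resp.\ $\hat v$) are identical. So the whole matter reduces to understanding (a) how the vertex term at $\hat v$ differs from the sum of the vertex terms at $v_1$ and $v_2$, and (b) how the two form domains are related.

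For (a): write $p=\deg(v_1)$, $q=\deg(v_2)$. By Remark~\ref{rem:order} the spectrum does not depend on the enumeration of endpoints within a vertex, so I would fix the enumeration at $\hat v$ so that the first $p$ endpoints are those that were at $v_1$ (in their original order) and the last $q$ are those that were at $v_2$ (in their original order). Splitting the double sum defining the $\hat v$-term into the blocks $k<j\le p$, $p<k<j$, and the ``cross'' block $k\le p<j$, the first two blocks reproduce exactly the $v_1$- and $v_2$-terms of $\sa$ (the sign $(-1)^{j+k}$ is invariant when both indices are reduced by $p$), while the cross block, after factoring $(-1)^{p+j+k}=(-1)^p(-1)^j(-1)^k$, collapses into a product. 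The outcome of this (essentially bookkeeping) computation is
\[
 \widetilde\sa[f,g]=\sa[f,g]+i(-1)^{p}\big(S_1(f)\overline{S_2(g)}-S_2(f)\overline{S_1(g)}\big),
\]
where $S_1(f):=\sum_{k=1}^{p}(-1)^k F_k(v_1)$ and $S_2(f):=\sum_{k=1}^{q}(-1)^k F_k(v_2)$; in particular $\widetilde\sa[f]=\sa[f]-2(-1)^{p}\Imag\!\big(S_1(f)\overline{S_2(f)}\big)$.

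For (b) and the conclusion I would split into the parity cases. If $\deg(v_1)+\deg(v_2)$ is odd, say $p$ even and $q$ odd (the other option is symmetric), then $\dom\sa$ forces $S_1(f)=0$ while $\hat v$ has odd degree, so $\dom\widetilde\sa$ imposes no constraint at $\hat v$; hence $\dom\sa\subseteq\dom\widetilde\sa$ and $\widetilde\sa[f]=\sa[f]$ for $f\in\dom\sa$. If both degrees are even, then $\dom\sa$ forces $S_1(f)=S_2(f)=0$ while $\dom\widetilde\sa$ only forces $S_1(f)+S_2(f)=0$ (as $(-1)^p=1$); again $\dom\sa\subseteq\dom\widetilde\sa$ and $\widetilde\sa=\sa$ there. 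In both cases a min-max subspace $\cF\subseteq\dom\sa$ realizing $\lambda_k(\Gamma)$ lies in $\dom\widetilde\sa$ with $\widetilde\sa[f]=\sa[f]$, so $\lambda_k(\widetilde\Gamma)\le\lambda_k(\Gamma)$, proving (i) and (ii). If both degrees are odd, then $\dom\sa$ places no constraint at $v_1,v_2$ while $\dom\widetilde\sa$ forces $S_1(f)-S_2(f)=0$ (as $(-1)^p=-1$); now $\dom\widetilde\sa\subseteq\dom\sa$, and for $f\in\dom\widetilde\sa$ we have $S_1(f)=S_2(f)$, whence $\widetilde\sa[f]=\sa[f]-2(-1)^p\Imag(|S_1(f)|^2)=\sa[f]$. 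A min-max subspace $\widetilde\cF\subseteq\dom\widetilde\sa$ realizing $\lambda_k(\widetilde\Gamma)$ then sits in $\dom\sa$ with $\sa[f]=\widetilde\sa[f]$, giving $\lambda_k(\Gamma)\le\lambda_k(\widetilde\Gamma)$, i.e.\ (iii).

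The only genuinely delicate step is the cross-block computation in (a): the sign $(-1)^p$ and the exact form of the linear functionals $S_1,S_2$ must be tracked carefully, since it is precisely the interplay between the sign of the extra term and whether the relevant combination of $S_1(f),S_2(f)$ is constrained to vanish that decides the direction of the inequality in each parity case. Once the identity for $\widetilde\sa-\sa$ is in hand, the rest is a routine domain comparison plus the min-max principle, which applies because both forms are closed, semibounded and have compact resolvent (Proposition~\ref{prop:formGeneral} together with compactness of $\Gamma$).
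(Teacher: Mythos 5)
Your proof is correct and follows essentially the same route as the paper: fix the endpoint enumeration at $\hat v$ so that the $v_1$-endpoints come first (justified by Remark~\ref{rem:order}), compare the quadratic forms and their domains, show the cross term at the merged vertex vanishes on the relevant domain, and conclude by the min-max principle. Your packaging of the cross term into the single identity $\widetilde\sa[f]=\sa[f]-2(-1)^{p}\Imag\bigl(S_1(f)\overline{S_2(f)}\bigr)$ is a slightly cleaner bookkeeping of what the paper does case by case with explicit double sums, but it is the same argument.
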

\begin{proof}
We denote by $\sa$ and $\widetilde \sa$ the quadratic forms of the graphs $\Gamma$ and $\widetilde \Gamma$, respectively, according to Proposition \ref{prop:formGeneral}.

(i) Take $\psi \in \dom \sa$. As we have $\cE (\widetilde \Gamma) = \cE (\Gamma)$, $\psi$ can be also understood as a function defined on $\widetilde \Gamma$. Denote by $\hat v$ the vertex in $\widetilde \Gamma$ which has resulted from merging $v_1$ and $v_2$ as in Definition \ref{def:join}. Since $\deg (v) = \deg (v_1) + \deg (v_2)$ is odd by assumption, we have obviously $\psi \in \dom \widetilde \sa$. As pointed out in Remark \ref{rem:order}, the order of edges incident in $\hat v$ where we impose the vertex conditions \eqref{eq:conditionsGeneral} does not influence the eigenvalues; therefore we may choose the enumeration in which the indices $1, \dots, d_1$ correspond to the vertex $v_1$ in $\Gamma$ and the indices $d_1 + 1, \dots, d_1 + d_2$ correspond to $v_2$, where $d_j = \deg (v_j)$, $j = 1, 2$. Moreover, without loss of generality we may suppose that $d_1$ is even and $d_2$ is odd. Then we have
\begin{align*}
 \widetilde \sa [\psi] - \sa [\psi] & = 2 \left( \sum_{j = 2}^{d_1 + d_2} \sum_{k = 1}^{j - 1} - \sum_{j = 2}^{d_1} \sum_{k = 1}^{j - 1} - \sum_{j = d_1 + 2}^{d_1 + d_2} \sum_{k = d_1 + 1}^{j - 1} \right) (-1)^{j + k - 1} \Imag (\Psi_k (\hat v) \overline{\Psi_j} (\hat v)) \\
 & = 2 \sum_{j = d_1 + 1}^{d_1 + d_2} \sum_{k = 1}^{d_1} (-1)^{j + k - 1} \Imag (\Psi_k (\hat v) \overline{\Psi_j} (\hat v)) \\
 & = 2 \Imag \sum_{j = d_1 + 1}^{d_1 + d_2} \overline{\Psi_j} (\hat v) \sum_{k = 1}^{d_1} (-1)^{j + k - 1} \Psi_k (\hat v) = 0,
\end{align*}
as $\psi \in \dom \sa$ and $d_1 = \deg (v_1)$ is supposed to be even. Thus $\dom \sa \subset \dom \widetilde \sa$ and $\widetilde \sa [\psi] = \sa [\psi]$ holds for all $\psi \in \dom \sa$; this yields inequality \eqref{eq:joining}.

(ii) Using the same notation as in part (i), as both $d_1$ and $d_2$ are even each vector $\psi \in \dom \sa$ satisfies
\begin{align*}
 \sum_{j = 1}^{d_1} (-1)^j \Psi_j (v_1) = 0 = \sum_{j = d_1 + 1}^{d_2} (-1)^j \Psi_j (v_2),
\end{align*}
and consequently
\begin{align*}
 \sum_{j = 1}^{d_1 + d_2} (-1)^j \Psi_j (\hat v) = 0;
\end{align*}
thus $\dom \sa \subset \dom \widetilde \sa$ holds. The same computation as in the proof of (i) shows that the two forms coincide on the smaller domain, hence \eqref{eq:joining} follows.

(iii) Assume finally that both $d_1$ and $d_2$ are odd, and let $\psi \in \dom \widetilde \sa$. Then $\psi \in \dom \sa$ as no restrictions are imposed at $v_1$ and $v_2$, and furthermore, the same computation as in (i), together with the fact that
\begin{align*}
 \sum_{k = 1}^{d_1 + d_2} (-1)^k \Psi_k (\hat v) = 0
\end{align*}
holds, gives
\begin{align*}
 \widetilde \sa [\psi] - \sa [\psi] & = 2 \Imag \sum_{j = d_1 + 1}^{d_1 + d_2} \overline{\Psi_j} (v_2) \sum_{k = 1}^{d_1} (-1)^{j + k - 1} \Psi_k (v_1) \\
 & = 2 \Imag  \sum_{j = d_1 + 1}^{d_1 + d_2} \overline{\Psi_j} (v_2) \sum_{k = d_1 + 1}^{d_1 + d_2} (- 1)^{j + k} \Psi_k (v_2) \\
 & = 2 \Imag \bigg| \sum_{j = d_1 + 1}^{d_1 + d_2} (- 1)^j \Psi_j (v_2) \bigg|^2 = 0.
\end{align*}
This means that $\dom \widetilde \sa \subset \dom \sa$ and the two forms coincide on the smaller domain; in this way we have shown \eqref{eq:joining2} concluding thus the proof.
\end{proof}

We are now in position to prove upper bounds for all eigenvalues. 
\begin{theorem}\label{thm:optimizationGeneral}
Let $\Gamma$ be a finite, compact metric graph  different from a path graph or a cycle graph. Assume that at each vertex of degree two or larger vertex conditions \eqref{eq:conditionsGeneral} are imposed while Neumann boundary conditions are chosen at the vertices of degree one. Then the inequalities
\begin{align*}
 \lambda_1 (\Gamma) \leq - 1, \qquad \lambda_2 (\Gamma) \leq 0 \qquad \text{and} \qquad \lambda_{2 k + 1} (\Gamma) \leq \frac{4 k^2 \pi^2}{L (\Gamma)^2}, \quad k = 1, 2, \dots\,,
\end{align*}
hold. In particular, there exists at least one negative eigenvalue.
\end{theorem}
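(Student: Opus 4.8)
The three bounds are attained by the figure‑8 graph of total length $L$, so the plan is to reduce an arbitrary $\Gamma$ to a figure‑8 by surgery and then invoke Example~\ref{ex:figure8}. After rescaling, every figure‑8 whose two edge lengths $l_1,l_2$ sum to $L$ has $\lambda_1=-1$, has $0$ as an eigenvalue of multiplicity one, and---factoring the secular equation of Example~\ref{ex:figure8} as $2\sin\!\big(\tfrac{(l_1+l_2)k}{2}\big)\big[\cos\!\big(\tfrac{(l_1-l_2)k}{2}\big)-\cos\!\big(\tfrac{(l_1+l_2)k}{2}\big)\big]=0$, so that the positive eigenvalue square‑roots form the union of the arithmetic progressions $\tfrac{2\pi n}{L}$, $\tfrac{2\pi m}{l_1}$, $\tfrac{2\pi m}{l_2}$ with multiplicity, and using $\tfrac{l_1}{L}+\tfrac{l_2}{L}=1$ to locate the $(2k-1)$-th of them---also has $\lambda_{2k+1}=4k^2\pi^2/L^2$. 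Hence it suffices to produce from $\Gamma$ \emph{some} figure‑8 of total length $L$ by transformations that decrease neither the total length nor any of $\lambda_1,\lambda_2,\lambda_3,\lambda_5,\dots$. Since $\Gamma$ is connected and is neither a path nor a cycle it has a vertex of degree at least three, which is the feature the reduction exploits.

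Concretely I would proceed in three stages. First, eliminate the degree‑one vertices: a pendant edge is deleted after transplanting its length onto another edge at its non‑pendant endpoint, in the spirit of the proof of Proposition~\ref{prop:transplantation}, and one checks this neither shortens $\Gamma$ nor raises the relevant eigenvalues. Next, collapse $\Gamma$ toward a single vertex: the engine is Proposition~\ref{prop:join}(iii), because merging two odd‑degree vertices never decreases an eigenvalue, while by Remark~\ref{rem:order} the enumeration of endpoints at a vertex may be chosen freely at each step; where the vertices available do not have parities allowing a further favourable merge, one first recreates odd‑degree vertices by the eigenvalue‑non‑decreasing vertex \emph{splittings} obtained by reversing Proposition~\ref{prop:join}(i)--(ii) (splitting a vertex into an odd and an even part, or into two even parts). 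Finally, once a bouquet of loops remains, reduce it to a figure‑8 by contracting loops one at a time, transplanting length beforehand so that $L$ is preserved. The outcome is $\lambda_j(\Gamma)\le\lambda_j(F)$ at the indices $j=1,2,3,5,7,\dots$ for a figure‑8 $F$ of total length $L$, which is exactly the assertion; the existence of a negative eigenvalue then follows at once from $\lambda_1(\Gamma)\le-1<0$.

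The hard part will be the edge count together with the parity combinatorics. Proposition~\ref{prop:join} never changes the edge set, so on its own it cannot turn a graph with more than two edges into a figure‑8; one genuinely needs a new surgery lemma that contracts (or removes) an edge, and the delicate point is that such a move can be expected to be monotone in the required direction only for $\lambda_1,\lambda_2$ and the odd‑indexed eigenvalues---which is consistent with the figure‑8 spectrum satisfying $\lambda_4=\lambda_5=\lambda_6$, so that a comparison graph adequate for $\lambda_5$ need not be adequate for $\lambda_4$, and this is why the theorem stops at odd indices. If pushing the reduction all the way to a figure‑8 with the correct monotonicity proves too awkward, a safer route is to prove $\lambda_{2k+1}\le 4k^2\pi^2/L^2$ by reducing $\Gamma$ instead to a cycle of circumference $L$ (whose spectrum $\{(2\pi m/L)^2\}_{m\ge0}$ with the standard multiplicities already realizes $\lambda_{2k+1}=4k^2\pi^2/L^2$), and to prove $\lambda_1\le-1$ and $\lambda_2\le0$ separately by an explicit trial‑space argument---merging two odd‑degree vertices of $\Gamma$ (an eigenvalue‑non‑decreasing move) to produce a vertex of degree $\ge4$ and testing the form there on a suitably cut‑off figure‑8 ground state, together with a constant‑type competitor for the second eigenvalue.
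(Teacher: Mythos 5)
Your plan has a genuine gap, and you have put your finger on it yourself: you believe the reduction to a figure-8 requires a new surgery lemma that contracts or removes an edge, and you do not supply one (nor is the monotonicity of such a move at all clear). The missing observation is that no such lemma is needed, because a vertex of degree two carrying the conditions \eqref{eq:conditionsGeneral} imposes nothing but continuity--Kirchhoff matching and is therefore spectrally invisible. Consequently a graph in which one vertex has degree four and every other vertex has degree two \emph{is} a figure-8 of the same total length, with no change to the edge set. The paper's reduction uses only Proposition~\ref{prop:join}: first merge vertices of odd degree in pairs (eigenvalue non-decreasing by part~(iii)) until all degrees are even; the resulting graph has an Eulerian circuit and, since $\Gamma$ is neither a path nor a cycle, a vertex of degree at least four; then split vertices into even-degree parts along the Eulerian circuit (eigenvalue non-decreasing by reversing part~(ii)) until only one degree-four vertex and degree-two vertices remain. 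This yields $\lambda_1(\Gamma)\le-1$ and $\lambda_2(\Gamma)\le 0$ from Example~\ref{ex:figure8}; one further even--even split of the degree-four vertex produces a cycle of length $L(\Gamma)$ and gives $\lambda_{2k+1}(\Gamma)\le 4k^2\pi^2/L(\Gamma)^2$. Your alternative endgame --- collapsing to a bouquet of loops and then ``contracting loops one at a time'' --- goes in the wrong direction and rests on exactly the unproved lemma you flag.

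Two further points. Your first stage (deleting pendant edges after ``transplanting'' their length) is not supported by anything available: Proposition~\ref{prop:transplantation} concerns only star graphs and only the ground state, and in any case pendant vertices have odd degree and are disposed of automatically by the odd--odd merging step. And your claim that a general (non-equilateral) figure-8 with $l_1+l_2=L$ satisfies $\lambda_{2k+1}=4k^2\pi^2/L^2$ is not justified as stated: the factorization of the secular determinant locates the roots but not the eigenvalue multiplicities (which are kernel dimensions of the $4\times4$ system, not orders of vanishing of the determinant); Example~\ref{ex:figure8} only settles this in the equilateral case. The safe route for the higher eigenvalues is the one you name as a fallback --- comparison with the cycle --- which is precisely what the paper does; but your fallback leaves $\lambda_1\le-1$ and $\lambda_2\le 0$ resting on a vaguely described trial-function argument rather than on the clean figure-8 comparison above.
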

\begin{proof}
The proof is carried out by performing surgery operations, more precisely merging and splitting vertices appropriately. First we successively join pairs of vertices of odd degree, an operation under which all eigenvalues behave non-decreasingly, see Proposition \ref{prop:join} (iii). This can be repeated until we arrive at a graph $\Gamma'$ in which every vertex has even degree; note that since the number of edge endpoints is even, it cannot happen we are left with $\Gamma'$ having a single vertex of odd degree. Also, since $\Gamma$ is neither a path graph nor a cycle, it has a vertex of degree three or higher, implying that $\Gamma'$ contains at least one vertex $v'$ of degree four or higher. Furthermore, $\Gamma'$ contains an Eulerian cycle, since each vertex of $\Gamma'$ has even degree. We use this and split all vertices except $v'$ successively into even degree vertices, in a way such that the Eulerian cycle remains closed; we continue until every vertex except $v'$ has degree two. The vertex $v'$ we split successively into a number of vertices of degree two and one vertex of degree four, splitting in such a way that the Eulerian cycle remains connected. Under these splitting operations, again all eigenvalues behave non-decreasingly by Proposition \ref{prop:join} (ii). The resulting graph $\Gamma''$ contains one vertex of degree four and only vertices of degree two otherwise, in other words, it can be identified with a figure-8 graph. In combination with Example \ref{ex:figure8} we obtain
\begin{align*}
 \lambda_1 (\Gamma) \leq \lambda_1 (\Gamma'') = - 1 \quad \text{and} \quad \lambda_2 (\Gamma) \leq \lambda_2 (\Gamma'') = 0.
\end{align*}
Finally, if we split the only vertex of $\Gamma''$ once more into two vertices of degree two, we effectively obtain a cycle graph $\Gamma'''$ with two vertices and continuity-Kirchhoff conditions at both of them. Hence another application of Proposition \ref{prop:join}, again claim (ii), yields $\lambda_j (\Gamma) \leq \lambda_j (\Gamma''')$ for all $j$. Since the eigenvalues of a cycle graph of length $L (\Gamma)$ with continuity-Kirchhoff conditions coincide with those of the interval $(0, L (\Gamma))$ with periodic boundary conditions $f (0) = f (L (\Gamma))$ and $f' (0) = f' (L (\Gamma))$, we have
\begin{align*}
 \lambda_{2 k} (\Gamma''') = \lambda_{2 k + 1} (\Gamma''') = \frac{4 k^2 \pi^2}{L (\Gamma)^2}, \quad k = 1, 2, \dots,
\end{align*}
and the proof of the theorem is complete.
\end{proof}

We conclude with a remark on the sharpness of the estimates in Theorem \ref{thm:optimizationGeneral}.

\begin{remark}
As we have seen in Example \ref{ex:figure8}, for the equilateral figure-8 graph we have indeed $\lambda_1 (\Gamma) = -1$, $\lambda_2 (\Gamma) = 0$, as well as
\begin{align*}
 \lambda_{2 k + 1} (\Gamma) = \frac{4 k^2 \pi^2}{L (\Gamma)^2}, \quad k \in \N,
\end{align*}
which coincides with the estimate given in Theorem \ref{thm:optimizationGeneral} for the eigenvalues with odd index three or larger. Furthermore, for even $k$ we have $\lambda_{2 k} (\Gamma) = \lambda_{2 k + 1} (\Gamma) = \lambda_{2 k + 2} (\Gamma)$, due to the multiplicity of the eigenvalue. Hence, in general it is not possible to provide estimates for the eigenvalues $\lambda_{2 k} (\Gamma)$ better than estimating them trivially by $\lambda_{2 k + 1} (\Gamma)$.
\end{remark}

\subsection*{Data availability statement}

Data are available in the article.

\subsection*{Conflict of interest}

The authors have no conflict of interest.


\subsection*{Acknowledgments}

P.E. was supported by the European Union's Horizon 2020 research and innovation programme under the Marie Sk{\l}odowska-Curie grant agreement No 873071. J.R.\ was supported by the grant no.\ 2022-03342
of the Swedish Research Council (VR).

\end{document}